\documentclass[11pt,a4paper]{article}
\PassOptionsToPackage{svgnames}{xcolor}

% --- CORE PACKAGES ---
\usepackage[T1]{fontenc}         % Font encoding for output
\usepackage{lmodern}             % High-quality modern fonts
\usepackage[british]{babel}      % British English hyphenation, terms
\usepackage{geometry}            % Page layout control
\usepackage{appendix}            % Appendix formatting
\usepackage{graphicx}            % Including graphics
\usepackage{microtype}           % Improved typography (subtle adjustments)
\usepackage{epigraph}            % For the opening quote

% --- MATH PACKAGES ---
\usepackage{mathtools}           % Enhanced math typesetting (includes amsmath)
\usepackage{amssymb}             % Extended math symbols (includes amsfonts)
\usepackage{amsthm}              % Theorem environments

% --- TABLES & FIGURES ---
\usepackage{booktabs}            % Professional table rules
\usepackage{tikz}                % For drawing diagrams
\usepackage{tikz-cd}             % For commutative diagrams (Figure 3)

% --- CITATIONS & REFERENCES ---
\usepackage[round,authoryear]{natbib} % Author-year citations
\usepackage{xcolor}  % No options here anymore
    % Use color names for hyperref
\usepackage{hyperref}            % Clickable links and PDF metadata (load LAST)

% --- GEOMETRY CONFIGURATION ---
\geometry{a4paper, margin=1in}

% --- THEOREM ENVIRONMENTS ---
\newtheorem{theorem}{Theorem}[section] % Number theorems within sections
\newtheorem{corollary}[theorem]{Corollary} % Corollaries share theorem counter
\theoremstyle{definition}
 % Add definition environment if needed
\newtheorem{example}[theorem]{Example}
\theoremstyle{remark}
\newtheorem{remark}[theorem]{Remark}
% proof environment is standard from amsthm

% --- CUSTOM MATH OPERATORS & COMMANDS ---
\DeclareMathOperator{\tr}{tr}       % Trace
\DeclareMathOperator{\AMSE}{AMSE}   % Asymptotic Mean Squared Error
\DeclareMathOperator{\AsyVar}{AsyVar} % Asymptotic Variance
\DeclareMathOperator{\AsyCov}{AsyCov} % Asymptotic Covariance
\DeclareMathOperator{\AsyE}{AsyE}     % Asymptotic Expectation
\DeclareMathOperator{\AsyBias}{AsyBias}% Asymptotic Bias
\DeclareMathOperator{\vecop}{vec}   % vec operator
\newcommand{\transpose}{\mathsf{T}}   % Transpose symbol (consistent, sans-serif T)

% --- HYPERREF SETUP ---
% (Setup options within \hypersetup call)
\hypersetup{
    colorlinks=true,           % Enable colored links
    linkcolor=NavyBlue,        % Color of internal links (sections, equations)
    citecolor=DarkGreen,       % Color of citation links (bibliography)
    urlcolor=DarkBlue,         % Color of external URLs
    pdftitle={Beyond Coordinates: Meta-Equivariance in Statistical Inference}, % PDF Title
    pdfauthor={William Cook},    % PDF Author
    pdfsubject={Statistical Theory, Optimisation, Geometry}, % PDF Subject
    pdfkeywords={Meta-Equivariance, Parameterisation Invariance, Convex Optimisation, Information Geometry, Estimator Combination}, % PDF Keywords
    bookmarksopen=true,        % Open bookmarks panel by default
    bookmarksnumbered=true,    % Add section numbers to bookmarks
    linktoc=all                % Make table of contents entries links
}

% --- DOCUMENT INFORMATION ---
\title{Beyond Coordinates: Meta-Equivariance in Statistical Inference}
\author{William Cook}
\date{14th April 2025} % Updated Date

%%%%%%%%%%%%%%%%%%%%%%%%%
%%% BEGIN DOCUMENT  %%%
%%%%%%%%%%%%%%%%%%%%%%%%%
\begin{document}

\maketitle

% --- EPIGRAPH ---
\setlength{\epigraphwidth}{0.8\textwidth} % Adjust width as needed
\epigraph{``A statistical law that depends on the coordinate system is not a true law of nature.''}{Shun-ichi Amari, \emph{Information Geometry and Its Applications (2016)}, \citep{amari2016}} % Added citation
\vspace{1.5em} % Space after epigraph

% --- ABSTRACT ---
\begin{abstract}
% Polished Abstract Candidate 1
Optimal statistical decisions should transcend the language used to describe them. Yet, how do we guarantee that the choice of coordinates—the parameterisation of an optimisation problem—does not subtly dictate the solution? This paper reveals a fundamental geometric invariance principle. We first analyse the optimal combination of two asymptotically normal estimators (\(\hat{\theta}_1, \hat{\theta}_2\)) under a strictly convex trace-AMSE risk \citep{boyd2004}. While methods for finding optimal weights \(W\) are known \citep{hjort2003, judge2004}, we prove (Theorem~\ref{thm:param_invariance}) that the resulting optimal \emph{estimator} (\(\hat{\theta}^*\)) is invariant under direct affine reparameterisations of the weighting scheme. This exemplifies a broader principle we term \textbf{meta-equivariance}: the unique minimiser of any strictly convex, differentiable scalar objective over a matrix space transforms covariantly under \emph{any} invertible affine reparameterisation of that space (Theorem~\ref{thm:meta_equivariance}). Distinct from classical statistical equivariance tied to data symmetries \citep{lehmann1998}, meta-equivariance arises from the immutable geometry of convex optimisation itself, demonstrated via matrix calculus \citep{magnus1999}. It guarantees that optimality, in these settings, is not an artefact of representation but an intrinsic, coordinate-free truth.
\end{abstract}
\vspace{1em}

% --- KEYWORDS ---
\noindent\textbf{Keywords:} Estimator Combination, Parameterisation Invariance, Meta-Equivariance, Convex Optimisation, Asymptotic Mean Squared Error, Matrix Calculus, Statistical Decision Theory, Coordinate Invariance, Information Geometry, Convex Geometry, Affine Transformation.
\vspace{2em} % Space before first section

% --- MAIN BODY ---

%%%%%%%%%%%%%%%%%%%%%%%%%%%%%%%%%%%%%%%%%%%%%%%%%%%%%%%%%%%%%%%%%%%%%%%
%%% SECTION 1: Optimality Beyond Coordinates                        %%%
%%%%%%%%%%%%%%%%%%%%%%%%%%%%%%%%%%%%%%%%%%%%%%%%%%%%%%%%%%%%%%%%%%%%%%%
\section{Optimality Beyond Coordinates}
\label{sec:intro}

Statistical inference often thrives on synthesising diverse information sources or methodological perspectives. Optimal combination of estimators is a cornerstone strategy, driving systematic enhancements in efficiency and robustness across disciplines, from seminal work in forecast combination \citep{bates1969} to modern model averaging techniques \citep{hjort2003, hoeting1999}. Formalising combination through optimisation, typically by minimising a risk function, provides a principled framework. However, this formalisation introduces a subtle but critical question: does the mathematical representation chosen—the parameterisation—influence the final optimal solution, or does the same optimal decision emerge regardless of the chosen coordinate system?

This paper confronts this question directly. Consider combining two asymptotically normal estimators \(\hat{\theta}_1, \hat{\theta}_2\) \citep{hansen1982}. A natural approach forms an affine combination \(\hat{\theta}(W) = (I-W)\hat{\theta}_1 + W\hat{\theta}_2\), selecting the weight matrix \(W\) to minimise a strictly convex risk, such as the trace-AMSE \citep{judge2004}. Yet, an algebraically equivalent representation is \(\hat{\theta}(W') = W'\hat{\theta}_1 + (I - W')\hat{\theta}_2\). Does this choice of parameterisation—effectively choosing a specific coordinate chart on the affine manifold of possible combined estimators—affect the final optimal point \(\hat{\theta}^*\) on that manifold? Is optimality an intrinsic property, or an artefact of coordinates?

This inquiry into coordinate independence resonates with fundamental ideas across statistical theory. It aligns with the pursuit of coordinate-free descriptions in information geometry \citep{amari2000}, a field rooted in foundational work on statistical metrics \citep{rao1945}. Our analysis suggests a complementary perspective, exploring the geometry not just of the statistical model space, but of the decision space associated with optimal inference. We reveal that under the geometric constraint of strict convexity, optimality transcends coordinates.

This paper is structured as follows. Section~\ref{sec:thm1} formally defines the estimator combination problem and proves Theorem~\ref{thm:param_invariance}, demonstrating the invariance of the optimal estimator under direct reparameterisation. Section~\ref{sec:thm2} introduces the general principle of meta-equivariance (Theorem~\ref{thm:meta_equivariance}), proving the covariance of optimal solutions under general affine maps, and explores its connections and extensions, including the concept of a dual geometry of inference. Section~\ref{sec:empirical} provides numerical validation and visualisations illustrating these geometric principles in action. We conclude by reflecting on inference as a geometric construction. Appendices contain technical derivations supporting the main results.

Meta-equivariance sits at the intersection of statistical decision theory, convex geometry, and information geometry. While its mathematical roots lie in classical results about optimisation under affine maps, we reinterpret this structure as a \textit{coordinate-invariance principle for statistical decision-making}. Unlike classical equivariance—which reflects symmetries of the data—meta-equivariance reflects the internal geometry of the decision problem itself. Estimator combination under AMSE is our initial proving ground, but the principle applies more broadly across convex inference problems, from penalised estimation to Bayesian model averaging. As such, meta-equivariance opens a new frontier in understanding what makes a statistical decision \textit{intrinsically optimal}, beyond the coordinate systems we use to describe it.

%%%%%%%%%%%%%%%%%%%%%%%%%%%%%%%%%%%%%%%%%%%%%%%%%%%%%%%%%%%%%%%%%%%%%%%
%%% CONCEPTUAL BRIDGE SECTION                                       %%%
%%%%%%%%%%%%%%%%%%%%%%%%%%%%%%%%%%%%%%%%%%%%%%%%%%%%%%%%%%%%%%%%%%%%%%%
\subsection{The Geometry of Invariance: A Conceptual Bridge}
\label{subsec:geometry_of_invariance}

Before presenting the formal theorems, we establish the geometric intuition underpinning the paper's core results. The central insight is that while the \emph{parameters} used to describe an optimal statistical decision might change depending on the chosen mathematical representation, the \emph{decision itself}—the optimal estimator—remains invariant. This principle, which we term \textbf{meta-equivariance}, arises not from symmetries in the data, but from the intrinsic geometry of the optimisation problem, specifically its convexity and behaviour under coordinate transformations.

\subsubsection*{Parameterisations as Coordinate Charts}

Our analysis fundamentally involves two distinct spaces:
\begin{enumerate}
    \item \textbf{Parameter Space ($\mathcal{W} = \mathbb{R}^{K \times K}$):} The space inhabited by the weight matrices $W$ (such as $W_A$ or $W_B$). Optimisation algorithms operate within this space.
    \item \textbf{Estimator Space ($\mathcal{E} \subset \mathbb{R}^K$):} The affine subspace containing the combined estimators $\hat{\theta}(W)$ that result from applying a weight matrix $W$ to the base estimators $\hat{\theta}_1, \hat{\theta}_2$. The final optimal decision, $\hat{\theta}^*$, is a point in this space.
\end{enumerate}

As detailed in the setup (\textbf{Subsection~\ref{subsec:setup_thm1}}), we consider estimators formed via different parameterisations, notably:
\begin{align*}
\hat{\theta}_A(W_A) &= (I - W_A) \hat{\theta}_1 + W_A \hat{\theta}_2 && \text{(Parameterisation A)} \\
\hat{\theta}_B(W_B) &= W_B \hat{\theta}_1 + (I - W_B) \hat{\theta}_2 && \text{(Parameterisation B)}
\end{align*}
These parameterisations represent different \textbf{coordinate charts} on the \emph{same} underlying affine manifold $\mathcal{E}$, which is defined by the constraint that the combining matrices sum to the identity ($A_1 + A_2 = I$). The invertible affine map $T(W) = I - W$ provides the transformation between these coordinate systems, relating $W_A$ and $W_B$ via $W_B = T(W_A)$. This aligns with the \textbf{Remark "Affine Structure and Coordinate Charts"} in Subsection~\ref{subsec:setup_thm1}.

\subsubsection*{Strict Convexity Creates Uniqueness}

The invariance of the optimal estimator $\hat{\theta}^*$ is guaranteed by the geometric shape of the objective function. We minimise the trace-AMSE risk $R(W)$ defined in \textbf{Equation~\eqref{eq:risk}}:
\[
R(W) = \tr(\Omega^{-1}\AMSE(W)).
\]
The structure of the AMSE matrix is derived in \textbf{Appendix~\ref{app:amse}} (see Eq.~\eqref{eq:amse_expanded} for the simplified case). Crucially, \textbf{Appendix~\ref{app:convexity}} proves that $R(W)$ is \textbf{strictly convex} provided the joint asymptotic covariance matrix $\Sigma$ is positive definite (\textbf{Assumption (A2)}) and the risk weighting matrix $\Omega \succ 0$.

\emph{Geometric Implication:} Strict convexity ensures that the risk function $R(W)$ possesses a \textbf{unique global minimum}. This minimum corresponds to a single, unambiguous point $\hat{\theta}^*$ in the estimator space $\mathcal{E}$. Regardless of the coordinate chart (Parameterisation A or B) used for optimisation, the algorithm must converge to this same invariant point. This is the core mechanism explained in the \textbf{Remarks "Coordinate-Free Optimality"} and \textbf{"The Geometric Engine: Strict Convexity"} following Theorem~\ref{thm:param_invariance}.

\subsubsection*{Optimisation Dynamics Under Coordinate Change}

How does the optimisation process itself behave when we switch coordinates? \textbf{Theorem~\ref{thm:meta_equivariance}} addresses this via the gradient transformation rule derived in \textbf{Appendix~\ref{app:gradient}} and presented in \textbf{Equation~\eqref{eq:gradient_transform}}:
\[
\nabla R_1(W_1) = A^\transpose \nabla R_2(T(W_1)) B^\transpose.
\]
For the specific involution $T(W) = I - W$ (where $A=-I, B=I, K=I$), this simplifies to $\nabla R_A(W_A) = -\nabla R_B(I - W_A)$. This ensures that optimisation trajectories calculated in different coordinate systems remain consistent; although the numerical gradient vectors differ, they point towards the same underlying minimum on the estimator manifold $\mathcal{E}$. This relates to the concept of a \textbf{dual geometry of inference} discussed in \textbf{Subsection~\ref{subsec:extensions}}.

\subsubsection*{Meta-Equivariance: Covariant Weights, Invariant Estimator}

The two main theorems formalise these geometric insights:
\begin{enumerate}
    \item \textbf{Theorem~\ref{thm:meta_equivariance} (Section~\ref{sec:thm2}):} Establishes the general principle. For a strictly convex, differentiable $R(W)$ under unconstrained optimisation, the optimal parameter $W_{opt}$ transforms \emph{covariantly} under any invertible affine map $T$: $W_{2,opt} = T(W_{1,opt})$. For $T(W)=I-W$, this directly implies $W_{B,opt} = I - W_{A,opt}$, as shown in \textbf{Equation~\eqref{eq:opt_weights_relation}}.
    \item \textbf{Theorem~\ref{thm:param_invariance} (Section~\ref{sec:thm1}):} Shows the critical consequence for estimator combination. Despite the optimal weights transforming covariantly, the resulting optimal \emph{estimator} $\hat{\theta}^*$ remains \emph{invariant}: $\hat{\theta}_A^* = \hat{\theta}_B^*$. The proof relies on demonstrating the algebraic equivalence when $W_{B,opt} = I - W_{A,opt}$.
\end{enumerate}

This fundamental relationship is visualised in \textbf{Figure~\ref{fig:estimator_invariance_geom}} (distinct optimal weights $W_{A,opt}, W_{B,opt}$ map to the same $\hat{\theta}^*$) and \textbf{Figure~\ref{fig:commutative_diagram}} (the transformation paths commute). \textbf{Corollary~\ref{cor:affine_invariance}} extends this estimator invariance to general affine parameterisations satisfying $A_1+A_2=I$. As noted in \textbf{Subsection~\ref{subsec:extensions}}, this meta-equivariance (invariance under reparameterisation of the \emph{optimisation problem}) is distinct from classical statistical equivariance (predictable behaviour under transformations of the \emph{data}).

\subsubsection*{Numerical Confirmation}

The theoretical predictions are borne out by the simulations in \textbf{Section~\ref{sec:empirical}}. As reported in \textbf{Table~\ref{tab:numerical_verification}}, numerical optimisation confirms both the covariance of weights and the invariance of the estimator to high precision:
\[
\| W_{B,opt} - (I - W_{A,opt}) \|_F \approx 10^{-7} \quad \text{and} \quad \| \hat{\theta}_A^* - \hat{\theta}_B^* \|_2 \approx 10^{-6}.
\]
This demonstrates that the geometric invariance translates into robust computational results. A simple illustration is the scalar case ($K=1$): if $W_{A,opt}=0.7$, then $W_{B,opt}=0.3$, and both yield $\hat{\theta}^* = 0.3\hat{\theta}_1 + 0.7\hat{\theta}_2$.

\subsubsection*{Transition to Formal Results}

With this geometric scaffolding firmly in place—explicitly connected to the paper's theorems, equations, appendices, figures, and remarks—we are prepared for the formal developments that follow. We first establish the invariance of the optimal estimator under direct reparameterisation in \textbf{Theorem~\ref{thm:param_invariance} (Section~\ref{sec:thm1})}. We then generalise the underlying principle of coordinate transformation for the optimal parameters in \textbf{Theorem~\ref{thm:meta_equivariance} (Section~\ref{sec:thm2})}. Finally, \textbf{Section~\ref{sec:empirical}} provides numerical validation, confirming these geometric principles in practice. This perspective reveals optimisation not just as computation, but as the discovery of intrinsic geometric points on a decision manifold, independent of the chosen descriptive language.

%%%%%%%%%%%%%%%%%%%%%%%%%%%%%%%%%%%%%%%%%%%%%%%%%%%%%%%%%%%%%%%%%%%%%%%
%%% END OF CONCEPTUAL BRIDGE SECTION                                %%%
%%%%%%%%%%%%%%%%%%%%%%%%%%%%%%%%%%%%%%%%%%%%%%%%%%%%%%%%%%%%%%%%%%%%%%%3

%%%%%%%%%%%%%%%%%%%%%%%%%%%%%%%%%%%%%%%%%%%%%%%%%%%%%%%%%%%%%%%%%%%%%%%
%%% SECTION 2: Theorem 1: Invariance in Estimator Space           %%%
%%%%%%%%%%%%%%%%%%%%%%%%%%%%%%%%%%%%%%%%%%%%%%%%%%%%%%%%%%%%%%%%%%%%%%%

\section{Theorem 1: Invariance in Estimator Space}
\label{sec:thm1}

Having introduced the parameterisation puzzle stemming from affine combinations of estimators, we now rigorously analyse the specific case of optimally combining two estimators under a direct affine reparameterisation. This section establishes the foundational invariance result for the optimal estimator.

\subsection{Setup}
\label{subsec:setup_thm1}

Let \(\hat{\theta}_1, \hat{\theta}_2 \in \mathbb{R}^K\) be two estimators of a common, unknown parameter vector \(\theta \in \mathbb{R}^K\). We operate within the standard large-sample framework \citep{hansen1982, lehmann1998}, assuming their joint asymptotic distribution, scaled by the sample size \(N\), is multivariate normal:
\begin{equation} \label{eq:joint_dist} \tag{2.1}
\begin{pmatrix} \sqrt{N}(\hat{\theta}_1 - \theta) \\ \sqrt{N}(\hat{\theta}_2 - \theta) \end{pmatrix}
\xrightarrow{d} \mathcal{N}\!\Biggl( \begin{pmatrix} \mathbf{b}_1 \\ \mathbf{b}_2 \end{pmatrix},\;
\Sigma = \begin{pmatrix} V_1 & C \\ C^\transpose & V_2 \end{pmatrix} \Biggr).
\end{equation}
Here, \(\mathbf{b}_1, \mathbf{b}_2\) represent the scaled asymptotic biases, while \(V_1 = \AsyVar(\sqrt{N}(\hat{\theta}_1 - \theta))\), \(V_2 = \AsyVar(\sqrt{N}(\hat{\theta}_2 - \theta))\), and \(C = \AsyCov(\sqrt{N}(\hat{\theta}_1 - \theta), \sqrt{N}(\hat{\theta}_2 - \theta))\) form the \(2K \times 2K\) joint asymptotic covariance matrix \(\Sigma\). For analytical clarity in the main text, we assume \(\mathbf{b}_1 = 0\) and \(\mathbf{b}_2 = \Delta_b\), where \(\Delta_b\) is the relative asymptotic bias vector. Appendix~\ref{app:amse} details the general case with arbitrary biases, confirming that the core convexity property and invariance result remain intact.

Our analysis relies on standard regularity conditions:
\begin{itemize}
    \item[\textbf{(A1)}] The individual asymptotic variance matrices \(V_1, V_2\) are symmetric positive definite (\(V_1 \succ 0, V_2 \succ 0\)).
    \item[\textbf{(A2)}] The joint asymptotic covariance matrix \(\Sigma\) is symmetric positive definite (\(\Sigma \succ 0\)). This condition is crucial as it ensures the strict convexity of the optimisation problem defined below (see Appendix~\ref{app:convexity}).
\end{itemize}

We consider affine combinations \(\hat{\theta}(W) = (I-W)\hat{\theta}_1 + W\hat{\theta}_2\), where \(W \in \mathbb{R}^{K \times K}\) is the weight matrix. Let \(\Omega \in \mathbb{R}^{K \times K}\) be a fixed, symmetric positive definite weighting matrix (\(\Omega \succ 0\)), allowing differential penalisation of errors across parameter components. The optimality criterion is the minimisation of the weighted trace of the Asymptotic Mean Squared Error (AMSE) matrix \citep{judge2004, lehmann1998}:
\begin{equation} \label{eq:risk} \tag{2.2}
R(W) = \tr(\Omega^{-1} \AMSE(W)).
\end{equation}
The AMSE matrix, derived in Appendix~\ref{app:amse}, incorporates both variance and squared bias contributions. Under our simplifying bias assumption (\(b_1=0, b_2=\Delta_b\)), it takes the form:
\begin{equation} \label{eq:amse_expanded} \tag{2.3} % Corrected Tagging to be sequential
\AMSE(W) = (I-W)V_1(I-W)^\transpose + WV_2W^\transpose + (I-W)CW^\transpose + WC^\transpose(I-W)^\transpose + W\Lambda W^\transpose,
\end{equation}
where \(\Lambda = \Delta_b \Delta_b^\transpose\) captures the bias contribution. A key property, proven in Appendix~\ref{app:convexity}, is that under Assumptions (A1)-(A2) and \(\Omega \succ 0\), the risk function \(R(W)\) is \textbf{strictly convex} with respect to \(W\) \citep{boyd2004}. This guarantees the existence of a unique optimal weight matrix \(W_{opt}\) that minimises \(R(W)\).

We explore two natural parameterisations for the affine combination:
\begin{itemize}
    \item \textbf{Parameterisation A:} \(\hat{\theta}_A(W_A) = (I - W_A)\hat{\theta}_1 + W_A\hat{\theta}_2\). The associated risk is \(R_A(W_A) = R(W_A)\). Here, \(W_A\) weights \(\hat{\theta}_2\).
    \item \textbf{Parameterisation B:} \(\hat{\theta}_B(W_B) = W_B\hat{\theta}_1 + (I - W_B)\hat{\theta}_2\). The associated risk is \(R_B(W_B) = R(I-W_B)\). Here, \(W_B\) weights \(\hat{\theta}_1\).
\end{itemize}

\begin{remark}[Affine Structure and Coordinate Charts]
Both parameterisations generate the same affine subspace (manifold) of estimators \(\{\hat{\theta} \mid \hat{\theta} = A_1 \hat{\theta}_1 + A_2 \hat{\theta}_2, A_1 + A_2 = I\}\). They merely provide different coordinate charts for describing points on this manifold. The relationship between the coordinates is \(W_B = I - W_A\), an affine transformation.
\end{remark}

Let \(W_{A,opt} = \arg\min_{W_A} R_A(W_A)\) and \(W_{B,opt} = \arg\min_{W_B} R_B(W_B)\) be the unique optimal weight matrices under each parameterisation, guaranteed by strict convexity. Let the corresponding optimal estimators be \(\hat{\theta}_A^* = \hat{\theta}_A(W_{A,opt})\) and \(\hat{\theta}_B^* = \hat{\theta}_B(W_{B,opt})\). The central question addressed by Theorem~\ref{thm:param_invariance} is whether these two optimisation procedures identify the same optimal estimator: Is \(\hat{\theta}_A^* = \hat{\theta}_B^*\)?

\subsection{The Invariant Estimator}
\label{subsec:invariant_estimator}

We now formally state and prove the invariance of the optimal estimator under this direct reparameterisation.

\begin{theorem}[Estimator Invariance Under Direct Reparameterisation] \label{thm:param_invariance}
Under Assumptions (A1) and (A2), with \(\Omega \succ 0\), the optimal combined estimator minimising the strictly convex trace-AMSE risk \(R(W)\) is invariant under the choice between Parameterisation A and Parameterisation B. The unique optimal estimators derived from minimising \(R_A(W_A)\) and \(R_B(W_B)\) coincide:
\[ \hat{\theta}_A^* = \hat{\theta}_B^*. \]
(Numerical confirmation in Section~\ref{sec:empirical}.)
\end{theorem}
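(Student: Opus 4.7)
The plan is to exploit the structural relationship $R_B(W_B) = R(I - W_B)$ between the two risk functions, which identifies $R_B$ as the composition of $R$ with the invertible affine involution $T(W) = I - W$. Under this observation, the theorem reduces to (i) showing that the unique minimisers transform covariantly, i.e.\ $W_{B,opt} = I - W_{A,opt}$, and (ii) substituting this relation into the definitions of $\hat{\theta}_A^*$ and $\hat{\theta}_B^*$ to check that the two resulting estimators coincide as vectors in $\mathbb{R}^K$.

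For step (i), I would first record that uniqueness of $W_{A,opt}$ follows immediately from strict convexity of $R$ (proved in Appendix~\ref{app:convexity} under (A1)–(A2) and $\Omega \succ 0$), and that uniqueness of $W_{B,opt}$ follows because strict convexity is preserved by composition with an invertible affine map, so $R_B = R \circ T$ is itself strictly convex. Since $T$ is a bijection of $\mathbb{R}^{K \times K}$ (indeed an involution, $T \circ T = \mathrm{id}$), the change of variables $W_A = I - W_B$ sets up a one-to-one correspondence between the level sets of $R_A$ and $R_B$; the unique global minimiser of $R_B$ must therefore be the $T$-preimage of the unique global minimiser of $R_A$, giving $W_{B,opt} = T^{-1}(W_{A,opt}) = I - W_{A,opt}$. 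Equivalently, one could argue via first-order conditions: the chain rule gives $\nabla R_B(W_B) = -\nabla R_A(I - W_B)$, so stationary points of $R_B$ correspond bijectively to stationary points of $R_A$ at $I - W_B$, and strict convexity upgrades this to a correspondence of unique minimisers.

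For step (ii), I would substitute $W_{B,opt} = I - W_{A,opt}$ directly into the defining formula for $\hat{\theta}_B^*$:
\[
\hat{\theta}_B^* = W_{B,opt}\hat{\theta}_1 + (I - W_{B,opt})\hat{\theta}_2 = (I - W_{A,opt})\hat{\theta}_1 + W_{A,opt}\hat{\theta}_2 = \hat{\theta}_A^*,
\]
which completes the argument.

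There is no deep obstacle here; the proof is essentially a corollary of the fact that unconstrained minima are preserved under bijective reparameterisation of the domain. The only conceptually delicate point—and the one I would be careful to articulate—is the legitimacy of transferring the uniqueness statement across $T$: this relies jointly on $T$ being a bijection (so that no spurious minima are created or destroyed) and on $R$ being strictly convex (so that ``the'' minimiser is well defined on both sides). Everything else is algebraic bookkeeping, and the result foreshadows the more general covariance statement of Theorem~\ref{thm:meta_equivariance}, in which the specific involution $T(W) = I - W$ is replaced by an arbitrary invertible affine map.
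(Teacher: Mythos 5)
Your proposal is correct and follows essentially the same route as the paper's own proof: establish the risk consistency $R_A(W_A) = R_B(T(W_A))$ for the involution $T(W) = I-W$, use strict convexity (preserved under invertible affine maps) to transfer uniqueness of minimisers and deduce $W_{B,opt} = I - W_{A,opt}$, then verify estimator equality by direct substitution. The alternative first-order-condition argument you sketch is precisely the mechanism the paper reserves for the more general Theorem~\ref{thm:meta_equivariance}, so nothing is missing.
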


\begin{proof}
The proof establishes invariance through four key steps, leveraging the risk function's geometry and the affine map relating the parameterisations.

\begin{enumerate}
    \item \textbf{Affine Transformation and Risk Consistency:} Define the affine transformation \(T: \mathbb{R}^{K \times K} \to \mathbb{R}^{K \times K}\) by \(T(W) = I - W\). This map is an involution (\(T = T^{-1}\)) and acts as a coordinate change between the two parameterisations: an estimator \(\hat{\theta}_A(W_A)\) in Form A corresponds to \(\hat{\theta}_B(T(W_A))\) in Form B. Since the AMSE depends only on the resulting estimator, not its parameterisation, \(\AMSE(\hat{\theta}_A(W_A)) = \AMSE(\hat{\theta}_B(T(W_A)))\). The trace-weighted risk functions are therefore related by composition with \(T\):
    \begin{equation} \label{eq:risk_consistency_th1} \tag{2.4} % Sequential tag
    R_A(W_A) = R(W_A) = R(T^{-1}(T(W_A))) = R_B(T(W_A)).
    \end{equation}
    Equation \eqref{eq:risk_consistency_th1} confirms that the risk landscape is fundamentally the same, merely viewed through different coordinate charts related by \(T\).

    \item \textbf{Strict Convexity Guarantees Uniqueness:} The risk functional \(R(W)\) is strictly convex \citep{boyd2004} under the assumptions (\(\Sigma \succ 0\), see Appendix~\ref{app:convexity}). This strict convexity is the crucial geometric property ensuring \(R(W)\) possesses a unique global minimum. Since \(T\) is an affine isomorphism, it preserves strict convexity \citep{boyd2004}. Consequently, both \(R_A(W_A)\) and \(R_B(W_B) = R_A(T(W_B))\) are strictly convex in their respective arguments and possess unique minimisers, \(W_{A,opt}\) and \(W_{B,opt}\). Geometrically, the risk defines a unique optimal point on the decision manifold.

    \item \textbf{Covariant Transformation of Optimal Weights:} Let \(W_{A,opt}\) be the unique minimiser of \(R_A(W_A)\). Consider minimising \(R_B(W_B) = R_A(T(W_B))\). Since \(T\) is a bijection mapping the parameter space onto itself, and \(W_{A,opt}\) uniquely minimises \(R_A\), the minimum of \(R_A(T(W_B))\) must occur precisely when \(T(W_B)\) equals \(W_{A,opt}\). Applying the involution \(T\) yields \(W_B = T(W_{A,opt})\). By the uniqueness of the minimum of \(R_B\), this must be the optimal weight under Parameterisation B:
    \begin{equation} \label{eq:opt_weights_relation} \tag{2.5} % Sequential tag
    W_{B,opt} = T(W_{A,opt}) = I - W_{A,opt}.
    \end{equation}
    This establishes that the optimal weight vector transforms covariantly under the affine coordinate change \(T\), a direct instance of meta-equivariance (Theorem~\ref{thm:meta_equivariance}).

    \item \textbf{Invariance of the Optimal Estimator:} Finally, we verify that these covariantly transformed optimal weights map to the identical optimal estimator. Substituting \eqref{eq:opt_weights_relation} into the definition of \(\hat{\theta}_B^*\):
    \begin{align*}
    \hat{\theta}_B^* &= \hat{\theta}_B(W_{B,opt}) \\
    &= W_{B,opt}\hat{\theta}_1 + (I - W_{B,opt})\hat{\theta}_2 \\
    &= (I - W_{A,opt})\hat{\theta}_1 + (I - (I - W_{A,opt}))\hat{\theta}_2 && \text{(Substituting } W_{B,opt} \text{)} \\
    &= (I - W_{A,opt})\hat{\theta}_1 + W_{A,opt}\hat{\theta}_2 \\
    &= \hat{\theta}_A(W_{A,opt}) = \hat{\theta}_A^*.
    \end{align*}
    Thus, \(\hat{\theta}_A^* = \hat{\theta}_B^*\). Despite the change in coordinate representation for the optimal weights (\(W_{A,opt} \neq W_{B,opt}\)), the optimisation under either parameterisation identifies the same unique point \(\hat{\theta}^*\) on the estimator manifold. The statistical decision remains invariant.
\end{enumerate}
\end{proof}

\begin{remark}[Coordinate-Free Optimality]
Theorem~\ref{thm:param_invariance} provides fundamental reassurance: the optimal statistical decision—the estimator \(\hat{\theta}^*\)—is intrinsic to the problem's structure (estimators \(\hat{\theta}_1, \hat{\theta}_2\) and convex risk \(R\)), not an artefact of the chosen affine coordinates (\(W_A\) vs. \(W_B\)). While the optimal weights transform covariantly (\(W_{B,opt} = I - W_{A,opt}\)) reflecting the coordinate change, the optimal estimator itself, residing on the affine manifold spanned by \(\hat{\theta}_1\) and \(\hat{\theta}_2\), remains fixed. This invariance elevates the optimal combination procedure, ensuring reliability and objectivity. The gradient flow of \( R(W) \) traces paths on the decision manifold, converging to the same invariant estimator regardless of the affine chart, reflecting a dual structure where optimality is coordinate-free.
\end{remark}

\begin{remark}[The Geometric Engine: Strict Convexity]
The invariance hinges critically on the strict convexity of the risk \(R(W)\), guaranteed by \(\Sigma \succ 0\) and \(\Omega \succ 0\) (Appendix~\ref{app:convexity}). Strict convexity ensures a unique minimum. If convexity were not strict (e.g., \(\Sigma\) positive semidefinite), the minimum might be attained over a set of weights. While the set of optimal weights would still transform predictably under \(T\) (\(\{ I - W \mid W \in \arg\min R_A \}\)), different algorithmic initialisations or parameterisations might select different points from this set, potentially leading to different (though equally risk-minimal) estimators. Strict convexity provides the geometric rigidity necessary for robust invariance to a single optimal estimator.
\end{remark}

\begin{corollary} \label{cor:affine_invariance}
This principle extends beyond Forms A and B. For any valid affine parameterisation of the combinations \(\hat{\theta}(A_1, A_2) = A_1 \hat{\theta}_1 + A_2 \hat{\theta}_2\) subject to \(A_1 + A_2 = I\), minimising the strictly convex risk \(R\) will identify the same unique optimal estimator \(\hat{\theta}^*\). All such parameterisations trace the same solution manifold, and the risk's unique minimum corresponds to a single, coordinate-free point upon it.
\end{corollary}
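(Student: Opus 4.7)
The plan is to reduce Corollary~\ref{cor:affine_invariance} to Theorem~\ref{thm:param_invariance} by showing that any valid affine parameterisation of the constraint set $\{(A_1, A_2) : A_1 + A_2 = I\}$ factors through the canonical Parameterisation~A via an invertible affine change of variables, after which strict convexity (and hence uniqueness of the minimiser) is transported across this change automatically.

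First I would pin down what counts as a valid affine parameterisation. Because the constraint $A_1 + A_2 = I$ determines $A_1$ once $A_2$ is fixed, any such parameterisation is encoded by an invertible affine map $\phi: \mathbb{R}^{K \times K} \to \mathbb{R}^{K \times K}$ with $A_2(W) = \phi(W)$ and $A_1(W) = I - \phi(W)$; concretely $\phi(W) = P W Q + C$ for some invertible $P, Q$ and constant $C$. Parameterisation~A corresponds to $\phi = \mathrm{id}$, and Parameterisation~B to the involution $\phi(W) = I - W$ already handled by Theorem~\ref{thm:param_invariance}.

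Next I would form the induced risk $\tilde{R}(W) := R(\phi(W))$, where $R$ is the strictly convex risk of Equation~\eqref{eq:risk}. Since strict convexity is preserved under precomposition with an invertible affine map—precisely the fact invoked in Step~2 of the proof of Theorem~\ref{thm:param_invariance}—the function $\tilde{R}$ inherits strict convexity and hence admits a unique minimiser $W^{\star}$. Because $R$ itself has a unique minimiser $W_{\mathrm{opt}}$ under Assumptions~(A1)-(A2) and $\Omega \succ 0$, and $\phi$ is bijective, the minimum of $\tilde{R}$ occurs exactly when $\phi(W^{\star}) = W_{\mathrm{opt}}$. Substituting back, the combined estimator at the optimum is
\[
A_1(W^{\star})\hat{\theta}_1 + A_2(W^{\star})\hat{\theta}_2 = (I - W_{\mathrm{opt}})\hat{\theta}_1 + W_{\mathrm{opt}}\hat{\theta}_2 = \hat{\theta}^{*},
\]
which is precisely the invariant optimum identified in Theorem~\ref{thm:param_invariance}. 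Thus every valid parameterisation yields the same estimator and traces out the same solution manifold.

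The main obstacle is conceptual rather than technical: pinning down the right notion of \emph{valid affine parameterisation}. A too-permissive reading that admits non-invertible affine maps would allow rank-deficient reparameterisations collapsing degrees of freedom and could yield spurious optima on a lower-dimensional slice; a too-restrictive one would render the claim tautological. Once $\phi$ is constrained to be an invertible affine self-map of the weight space, the corollary follows as an immediate specialisation of the meta-equivariance principle developed in Theorem~\ref{thm:meta_equivariance}, requiring no further computation.
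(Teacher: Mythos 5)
Your proposal is correct and follows essentially the route the paper intends for this corollary (which it states without a separate formal proof): reduce any valid parameterisation to Parameterisation A via an invertible affine change of coordinates, transport strict convexity---and hence uniqueness of the minimiser---across that change, and substitute back to recover the same invariant estimator \(\hat{\theta}^*\), exactly as in Steps 2--4 of the proof of Theorem~\ref{thm:param_invariance}. The only cosmetic remark is that your restriction of \(\phi\) to the Kronecker form \(PWQ + C\) is unnecessary: the argument uses nothing beyond \(\phi\) being an invertible affine self-map of \(\mathbb{R}^{K \times K}\), so it covers general affine reparameterisations precisely as the corollary claims, and your identification of invertibility as the content of ``valid'' is the right resolution of the statement's main ambiguity.
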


\begin{example}[Illustrative]
Suppose \(K=2\) and optimisation using Form A yields the unique optimal weight matrix \(W_{A,opt} = \begin{psmallmatrix} 0.6 & 0.1 \\ 0.2 & 0.7 \end{psmallmatrix}\). Theorem~\ref{thm:param_invariance} guarantees that optimisation using Form B must yield \(W_{B,opt} = I - W_{A,opt} = \begin{psmallmatrix} 0.4 & -0.1 \\ -0.2 & 0.3 \end{psmallmatrix}\). Crucially, the resulting optimal estimator is identical: \(\hat{\theta}^* = (I - W_{A,opt})\hat{\theta}_1 + W_{A,opt}\hat{\theta}_2 = W_{B,opt}\hat{\theta}_1 + (I - W_{B,opt})\hat{\theta}_2\).
\end{example}

\noindent % Ensures no indent before the figure reference
Figure~\ref{fig:estimator_invariance_geom} visualises this key result: although the optimal weights differ under the two parameterisations, their respective combination maps yield the same invariant estimator \(\hat{\theta}^*_{\text{opt}}\).

% --- Figure 2: Estimator Invariance Geometry ---
\begin{figure}[ht]
    \centering
    % IMPORTANT: Replace '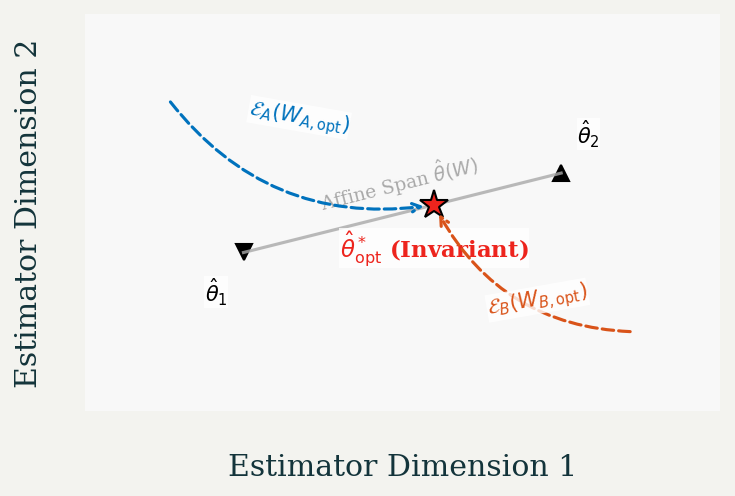' with the actual filename if available
    % Using a placeholder for now, adjust as needed.
    \includegraphics[width=0.75\linewidth]{Figure_2.png} % Placeholder image file
    \caption{Invariant estimator under affine reparameterisation. Black triangles mark base estimators $\hat{\theta}_1$ and $\hat{\theta}_2$, spanning the affine subspace of possible combinations. The red star shows the unique optimal estimator $\hat{\theta}^*_{\text{opt}}$, lying on this span. Dashed arrows depict the estimator maps $\mathcal{E}_A(W_{A,\text{opt}})$ and $\mathcal{E}_B(W_{B,\text{opt}})$ from the two parameterisations. Despite distinct weights, both maps yield the same point, confirming Theorem~\ref{thm:param_invariance}: the estimator is invariant, even as its coordinates change.}
    \label{fig:estimator_invariance_geom}
\end{figure}

%%%%%%%%%%%%%%%%%%%%%%%%%%%%%%%%%%%%%%%%%%%%%%%%%%%%%%%%%%%%%%%%%%%%%%%
%%% END OF SECTION 2                                                %%%
%%%%%%%%%%%%%%%%%%%%%%%%%%%%%%%%%%%%%%%%%%%%%%%%%%%%%%%%%%%%%%%%%%%%%%%

%%%%%%%%%%%%%%%%%%%%%%%%%%%%%%%%%%%%%%%%%%%%%%%%%%%%%%%%%%%%%%%%%%%%%%%
%%% SECTION 3: Theorem 2: Meta-Equivariance in Parameter Space    %%%
%%%%%%%%%%%%%%%%%%%%%%%%%%%%%%%%%%%%%%%%%%%%%%%%%%%%%%%%%%%%%%%%%%%%%%%

\section{Theorem 2: Meta-Equivariance in Parameter Space}
\label{sec:thm2}

\subsection{The General Principle}
\label{subsec:general_principle}

% --- Refinement 1: Opening Interpretive Hook ---
\emph{Theorem~\ref{thm:param_invariance} showed that optimal estimators remain unchanged under direct affine reparameterisations of the weight matrix. But what deeper structure makes this possible?}
% --- Refinement 16: Motivate Meta-Equivariance Term ---
This specific invariance is an instance of a more general principle governing the behaviour of optimal solutions under transformations of the parameter space itself. \emph{To capture this phenomenon, distinct from classical statistical equivariance tied to data transformations, we introduce the term \textbf{meta-equivariance}, emphasizing its focus on the parameter space’s coordinate structure.} We now formalise this principle, demonstrating that the covariance observed in Theorem~\ref{thm:param_invariance} is not coincidental but a fundamental consequence of optimising a strictly convex function over a space equipped with an affine structure.

Let \(R: \mathcal{W} \to \mathbb{R}\) be a scalar-valued objective function defined over the parameter space % --- Refinement 9: Parameter Space Explicitness ---
\(\mathcal{W} = \mathbb{R}^{m \times n}\), \emph{viewed as a real vector space equipped with the standard Euclidean structure (and thus a smooth manifold)}. Assume \(R(W)\) is \textbf{strictly convex} and \textbf{differentiable} over \(\mathcal{W}\).

Consider an arbitrary invertible affine transformation \(T: \mathcal{W} \to \mathcal{W}\) defined by:
\begin{equation} \label{eq:affine_transform} \tag{3.1}
W_2 = T(W_1) = A W_1 B + K,
\end{equation}
where \(A \in \mathbb{R}^{m \times m}\) and \(B \in \mathbb{R}^{n \times n}\) are invertible matrices, and \(K \in \mathbb{R}^{m \times n}\) is a fixed matrix offset. This transformation establishes a new coordinate system \(W_2\) for the parameter space \(\mathcal{W}\).
% --- Refinement 17: Clarify Objective Function Consistency ---
Define the objective in the original coordinates as \(R_1(W_1) = R(W_1)\) and in the new coordinates as \(R_2(W_2) = R(T^{-1}(W_2))\), where \(R\) is the same underlying function. Their consistency follows: \(R_1(W_1) = R_2(T(W_1))\), ensuring the risk geometry is preserved across coordinate systems.

\begin{theorem}[Meta-Equivariance under Affine Reparameterisation] \label{thm:meta_equivariance}
Let \(R: \mathcal{W} \to \mathbb{R}\) be strictly convex and differentiable. Let \(T(W) = AWB + K\) be an invertible affine transformation on \(\mathcal{W}\) (\(A, B\) invertible). % --- Refinement 2: Unconstrained Opt Clarification ---
Assume that optimisation is unconstrained over \(\mathcal{W}\), \emph{i.e., no constraints are imposed on \(W\)}. Let \(W_{1,opt} = \arg\min_{W_1} R_1(W_1)\) and \(W_{2,opt} = \arg\min_{W_2} R_2(W_2)\) be the unique global minimisers, guaranteed by strict convexity. Then, the optimal parameter transforms covariantly under \(T\):
\[ W_{2,opt} = T(W_{1,opt}) = A W_{1,opt} B + K. \]
\end{theorem}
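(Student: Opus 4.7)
The plan is to reduce the theorem to a change-of-variables identity and then invoke uniqueness of the minimiser on both sides of the reparameterisation. First I would observe that, since $T(W_1) = AW_1B + K$ is an invertible affine map with inverse $T^{-1}(W_2) = A^{-1}(W_2 - K)B^{-1}$, the reparameterised objective $R_2 = R \circ T^{-1}$ is just the original function $R$ read through an invertible relabelling of points in $\mathcal{W}$. In particular, the consistency relation $R_1(W_1) = R_2(T(W_1))$ follows immediately from substitution.

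Next I would verify that strict convexity is inherited by $R_2$: composing a strictly convex function with an injective affine map preserves strict convexity (the standard result cited already in the proof of Theorem~\ref{thm:param_invariance}), and differentiability is preserved by the chain rule. Hence both $R_1 = R$ and $R_2$ are strictly convex, differentiable, and unconstrained on $\mathcal{W}$, so they each admit a unique global minimiser, $W_{1,opt}$ and $W_{2,opt}$ respectively — exactly as the statement assumes.

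The key step is then a one-line change-of-variables. Because $T$ is a bijection of $\mathcal{W}$ onto itself and the minimisation ranges over the \emph{whole} space,
\begin{equation*}
\min_{W_2 \in \mathcal{W}} R_2(W_2) \;=\; \min_{W_1 \in \mathcal{W}} R_2(T(W_1)) \;=\; \min_{W_1 \in \mathcal{W}} R_1(W_1),
\end{equation*}
with the right-hand minimum attained uniquely at $W_1 = W_{1,opt}$. Substituting back, the left-hand minimum is attained uniquely at $W_2 = T(W_{1,opt})$, and uniqueness of $W_{2,opt}$ forces $W_{2,opt} = T(W_{1,opt}) = A W_{1,opt} B + K$, which is the claim.

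For an independent confirmation — and a formula reusable elsewhere in the paper — one could differentiate $R_1(W_1) = R_2(T(W_1))$ via the matrix chain rule to obtain $\nabla R_1(W_1) = A^{\transpose} \nabla R_2(T(W_1)) B^{\transpose}$, as recorded in Equation~\eqref{eq:gradient_transform}. Evaluating at $W_1 = W_{1,opt}$ and using invertibility of $A$ and $B$ reduces the first-order condition $\nabla R_1(W_{1,opt}) = 0$ to $\nabla R_2(T(W_{1,opt})) = 0$; strict convexity then forces $T(W_{1,opt}) = W_{2,opt}$. There is no genuine obstacle in the argument, since the theorem is essentially the statement that a bijective reparameterisation commutes with $\arg\min$ for a function with a unique minimum. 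The only subtle point worth flagging in the write-up is the role of the \emph{unconstrained} hypothesis: if the optimisation were restricted to a feasible set $\mathcal{F} \subset \mathcal{W}$ not preserved by $T$, then the image $T(\mathcal{F})$ would differ from the feasible set in the new coordinates and covariance could fail. With the whole space as feasible, bijectivity of $T$ makes the covariance identity automatic.
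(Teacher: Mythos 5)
Your proposal is correct, but your primary argument takes a genuinely different route from the paper's. The paper proves Theorem~\ref{thm:meta_equivariance} analytically: it derives the gradient transformation rule \(\nabla R_1(W_1) = A^\transpose \nabla R_2(T(W_1)) B^\transpose\) (Appendix~\ref{app:gradient}), substitutes the first-order condition \(\nabla R_1(W_{1,opt}) = 0\), uses invertibility of \(A^\transpose\) and \(B^\transpose\) to conclude \(\nabla R_2(T(W_{1,opt})) = 0\), and then invokes strict convexity to identify this stationary point with the unique minimiser \(W_{2,opt}\). Your main argument instead bypasses calculus entirely: since \(T\) is a bijection of \(\mathcal{W}\) onto itself and the optimisation is unconstrained, \(\min_{W_2} R_2(W_2) = \min_{W_1} R_2(T(W_1)) = \min_{W_1} R_1(W_1)\), and uniqueness of the minimisers on both sides forces \(W_{2,opt} = T(W_{1,opt})\). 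This is the same reasoning the paper uses in Step~3 of the proof of Theorem~\ref{thm:param_invariance}, but applied here it is more elementary and strictly more general --- it needs neither differentiability nor even convexity, only bijectivity of \(T\) and uniqueness of the minimisers, so it would extend to non-smooth or non-convex objectives with a unique optimum. What the paper's route buys in exchange is the explicit formula \eqref{eq:gradient_transform}, which it reuses for the covector/contravariance interpretation and the ``dual geometry'' discussion; you recover that formula anyway in your secondary ``independent confirmation,'' which reproduces the paper's argument almost verbatim. Your closing remark about the unconstrained hypothesis correctly identifies the same limitation the paper flags in its ``Scope and Limitations'' remark.
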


% --- Figure 1 Placement: After Theorem, Before Proof ---
\begin{figure}[ht]
    \centering
    % IMPORTANT: Replace '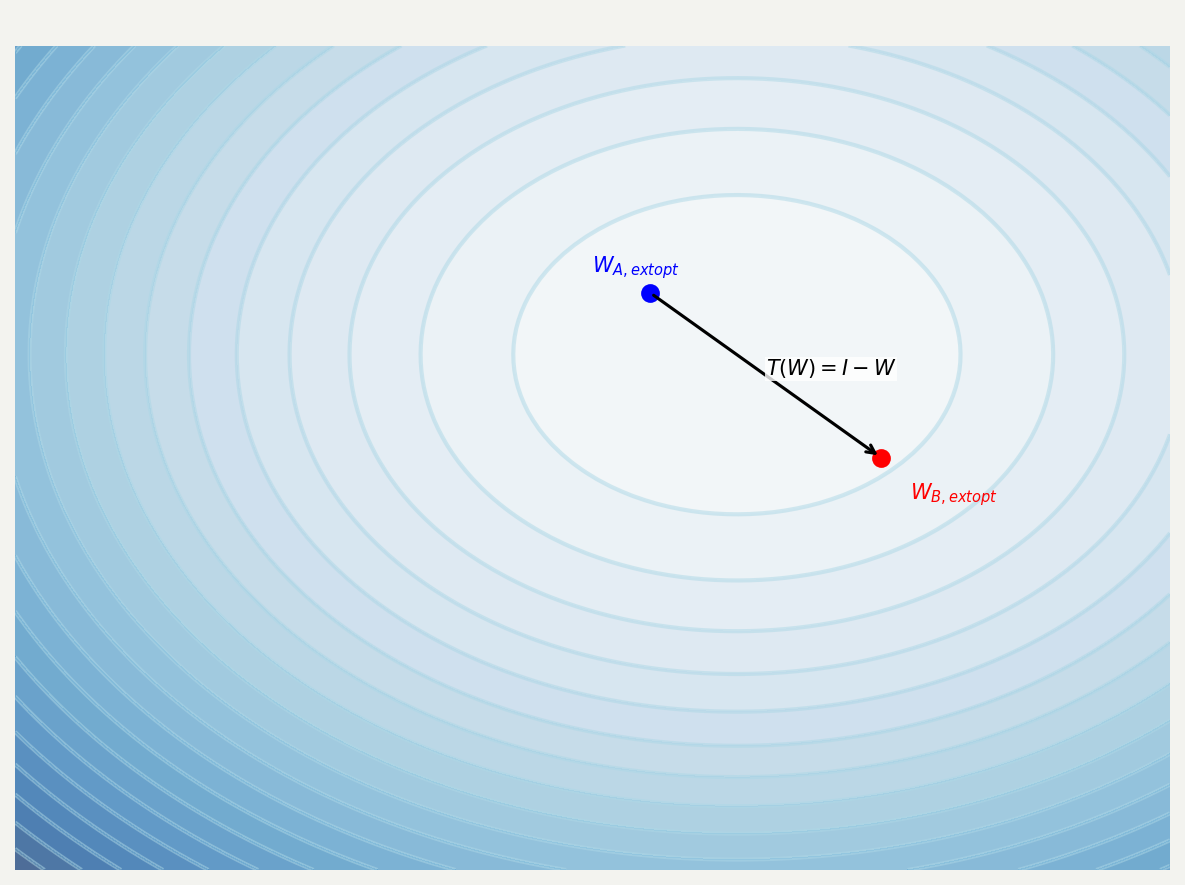' with the actual filename if available
    \includegraphics[width=0.6\linewidth]{Figure_1.png} % Placeholder image file
    % --- Refinement 19: Enhance Geometric Intuition (Caption Adjusted for Conceptual Figure) ---
    \caption{Conceptual risk geometry in parameter space. Contours represent level sets of the strictly convex risk \(R(W)\). Markers denote optimal weights $W_{A,\text{opt}}$ and $W_{B,\text{opt}}$ under two affine parameterisations related by $T$. The affine map $T$ transforms the optimal weight covariantly (\(W_{B,opt} = T(W_{A,opt})\)), illustrating Theorem~\ref{thm:meta_equivariance}. Strict convexity ensures a unique minimum point whose coordinates change predictably under \(T\).}
    \label{fig:param_space_equivariance} % Consistent label
\end{figure}

\begin{proof}
% --- Refinement 8: Proof Strategy Roadmap ---
\emph{The proof establishes covariance by demonstrating that (i) the optimisation problem retains its essential convex structure under \(T\), (ii) the optimality condition (zero gradient) transforms predictably, and (iii) strict convexity guarantees this transformed condition identifies the unique optimum in the new coordinates.}

\begin{enumerate}
    \item \textbf{Preservation of Convex Structure:}
    % --- Refinement 18: Streamline Convexity Preservation ---
    Since \(T(W_1) = A W_1 B + K\) is an invertible affine map, its inverse \(T^{-1}(W_2) = A^{-1}(W_2 - K)B^{-1}\) is also affine. The composition \(R_2(W_2) = R(T^{-1}(W_2))\) thus inherits strict convexity and differentiability from \(R\), as affine transformations preserve these properties \citep[Sec 3.2.2]{boyd2004}, ensuring \(R_2\) has a unique global minimum. The relation \(R_1(W_1) = R_2(T(W_1))\) confirms the underlying objective function remains consistent across coordinates.

    \item \textbf{Gradient Transformation under \(T\):} We derive the relationship between the gradients \(\nabla R_1(W_1)\) and \(\nabla R_2(W_2)\) using matrix calculus \citep{magnus1999}. As detailed in Appendix~\ref{app:gradient}, the chain rule yields:
    \begin{equation} \label{eq:gradient_transform} \tag{3.2} % Adjusted tag
    \nabla R_1(W_1) = A^\transpose \nabla R_2(T(W_1)) B^\transpose.
    \end{equation}
    % --- Refinement 3: Enhance "gradient as covector" language ---
    This transformation rule reflects the geometric nature of the gradient: \emph{The gradient \(\nabla R\) acts as a covector field over the parameter manifold \(\mathcal{W}\), and hence transforms contravariantly under coordinate changes induced by the affine reparameterisation \(T\)}. (Here, contravariant transformation of the covector leads to the specific form involving \(A^\transpose\) and \(B^\transpose\).)

    \item \textbf{Optimality Condition Transformed:} The unique minimiser \(W_{1,opt}\) of the unconstrained, differentiable, convex function \(R_1\) must satisfy the first-order optimality condition \(\nabla R_1(W_{1,opt}) = 0\). Substituting this into the gradient transformation \eqref{eq:gradient_transform}:
    \[ 0 = A^\transpose \nabla R_2(T(W_{1,opt})) B^\transpose. \]
    Since \(A\) and \(B\) are invertible, \(A^\transpose\) and \(B^\transpose\) are also invertible. Pre-multiplying by \((A^\transpose)^{-1}\) and post-multiplying by \((B^\transpose)^{-1}\) yields:
    \[ (A^\transpose)^{-1} 0 (B^\transpose)^{-1} = \nabla R_2(T(W_{1,opt})), \]
    which simplifies to \(\nabla R_2(T(W_{1,opt})) = 0\). This shows that the point \(T(W_{1,opt})\) satisfies the first-order optimality condition for \(R_2\) in the \(W_2\) coordinate system.

    \item \textbf{Covariance of the Optimum:} % --- Refinement 10: Sharpen Uniqueness Argument ---
    From Step 3, the transformed optimality condition \(\nabla R_2 = 0\) is satisfied at the point \(T(W_{1,opt})\) in the \(W_2\) coordinates. Since Step 1 established that \(R_2\) is strictly convex, it possesses a unique global minimum where its gradient vanishes. \emph{Therefore, the point \(T(W_{1,opt})\) must coincide with this unique minimiser}, \(W_{2,opt}\).
    Thus, we conclude:
    \[ W_{2,opt} = T(W_{1,opt}). \]
    This confirms that the optimal parameter matrix transforms covariantly according to the affine map \(T\) defining the reparameterisation.
\end{enumerate}
\end{proof}

\begin{remark}[Geometric Interpretation]
% --- Refinement 4: Adapted from prior suggestion ---
Theorem~\ref{thm:meta_equivariance} reveals that for strictly convex optimisation problems, the location of the optimum is intrinsically tied to the objective function’s geometry, \emph{viewed as a point on the decision manifold defined by the risk geometry}. The transformation \(T\) acts like a change of coordinates in \(\mathcal{W}\). The theorem guarantees that the optimal solution, when expressed in the new coordinates, is precisely the transformed version of the original optimum.
\end{remark}

\begin{remark}[Connection to Information Geometry]
This principle resonates with concepts in information geometry \citep{amari2000}, where affine transformations often preserve crucial structures (e.g., dually flat manifolds, Fisher information under certain parameterisations \citep{kass1997}). While information geometry primarily studies the \textbf{model space}, meta-equivariance reveals a parallel structure-preserving role for affine maps within the \textbf{decision space} defined by convex risk minimisation.
\end{remark}

\begin{remark}[Scope and Limitations]
Meta-equivariance holds for unconstrained optimisation. For constrained problems (\(W \in \mathcal{C}\)), covariance \(W_{2,opt} = T(W_{1,opt})\) holds if \(T\) maps the feasible set \(\mathcal{C}\) onto itself (\(T(\mathcal{C}) = \mathcal{C}\)). If \(T\) maps \(\mathcal{C}\) to \(\mathcal{C}'\), the relationship connects optima over these respective sets. Non-invertible or non-affine transformations generally break this covariance. Loss of strict convexity may lead to non-unique optima, where the set transforms covariantly but selecting a specific point might become parameterisation-dependent. Non-differentiability may require subgradient analysis, but the core principle might extend if strict convexity ensures a unique subgradient optimum.
\end{remark}

\subsection{Connection to Theorem 1}
\label{subsec:connection_thm1}

% --- Refinement 11: Transition Sentence ---
\emph{Having established the general principle of meta-equivariance under arbitrary invertible affine reparameterisations, we now demonstrate how Theorem~\ref{thm:param_invariance} emerges as a direct and illuminating consequence.}

Theorem~\ref{thm:param_invariance} concerned the specific affine transformation \(T(W) = I - W\) relating Parameterisation A (\(W_A\)) and Parameterisation B (\(W_B = T(W_A)\)). This corresponds to the general form \(T(W) = AWB + K\) with \(A = -I_{K \times K}\), \(B = I_{K \times K}\), and \(K = I_{K \times K}\). Since \(A = -I\) and \(B = I\) are trivially invertible, the conditions of Theorem~\ref{thm:meta_equivariance} apply directly.

% --- Refinement 5: Minor clarification of mapping ---
\emph{In our case, the objective \(R(W)\) is the trace-AMSE risk from estimator combination, proven strictly convex and differentiable (Appendix~\ref{app:convexity})} under Assumptions (A1)-(A2) and \(\Omega \succ 0\). Let \(R_A(W_A) = R(W_A)\) and \(R_B(W_B) = R(T^{-1}(W_B)) = R(I - W_B)\). Theorem~\ref{thm:meta_equivariance} immediately implies that the unique optimal weights are related by:
\[ W_{B,opt} = T(W_{A,opt}) = -I W_{A,opt} I + I = I - W_{A,opt}. \]
This precisely recovers Equation \eqref{eq:opt_weights_relation} from the proof of Theorem~\ref{thm:param_invariance}. While Theorem~\ref{thm:meta_equivariance} guarantees the covariance of the \emph{optimal weights} (\(W_{opt}\)), Theorem~\ref{thm:param_invariance} took the crucial further step of showing that these covariantly transforming weights result in the \emph{same invariant estimator} \(\hat{\theta}^*\). Meta-equivariance operates in the parameter space; estimator invariance operates in the decision space.

\subsection{Extensions and Connections}
\label{subsec:extensions}
%%% TODO: Perform Refinement 14 check (Internal Refs & Terminology consistency) %%% - DONE (Looks consistent)

\subsubsection*{A Dual Geometry of Inference}
% --- Refinement 12: Refine Manifold Distinction ---
The interplay between Theorem~\ref{thm:param_invariance} and Theorem~\ref{thm:meta_equivariance} highlights a dual geometry. Theorem~\ref{thm:meta_equivariance} describes how the coordinates of the optimal solution (\(W_{opt}\)) transform within the \emph{parameter manifold} \(\mathcal{W}\) under affine reparameterisations (\(T\)). The optimal weight matrix itself is coordinate-dependent. However, Theorem~\ref{thm:param_invariance} shows that for the specific problem of estimator combination, this coordinate dependence vanishes when we map back to the \emph{decision manifold} spanned by the base estimators \(\hat{\theta}_1, \hat{\theta}_2\). \emph{Affine reparameterisations merely re-chart the parameter manifold \(\mathcal{W}\), changing the coordinates \(W\) used to describe a potential solution. However, the optimisation itself identifies a unique point on the distinct decision manifold (spanned by the base estimators in the case of Theorem 1, or defined implicitly by the objective \(R\) more generally). This optimal decision point remains invariant.} The optimal estimator \(\hat{\theta}^*\) is not just optimal—it is geometrically intrinsic to the underlying problem, independent of the chosen parameterisation chart.

\subsubsection*{Distinction from Classical Statistical Equivariance}
Meta-equivariance should be distinguished from classical statistical equivariance \citep{lehmann1998}. Classical equivariance relates the behaviour of estimators under transformations of the \emph{sample space} or \emph{parameter space} induced by underlying symmetries (e.g., location shifts, scale changes). An estimator \(\delta(x)\) is equivariant under a group \(G\) of transformations if \(\delta(g(x)) = \bar{g}(\delta(x))\) for all \(g \in G\), where \(\bar{g}\) is the induced transformation on the parameter space \citep[see also][]{eaton1989}. Meta-equivariance, conversely, concerns the behaviour of the \emph{optimiser} (the argument \(W_{opt}\) that minimises the objective function) under reparameterisations of the \emph{optimisation variables themselves}, specifically affine reparameterisations. It arises from the geometry of the objective function (convexity) and the nature of the coordinate change (affine), not necessarily from symmetries in the data generating process.

\subsubsection*{Connection to Equivariant Machine Learning}
The principle shares conceptual links with equivariant machine learning \citep{cohen2016equivariant}, where models are designed such that their outputs transform predictably under symmetries applied to their inputs (e.g., rotation invariance in image recognition). While the context differs (data symmetries vs. optimisation parameterisation), both concepts leverage transformation properties to ensure robustness and structure. Meta-equivariance guarantees robustness of the optimal \emph{parameters} to affine coordinate choices in convex settings.

\subsubsection*{Beyond Strict Convexity: Open Questions}
% --- Refinement 20: Add Concrete Example ---
The results presented rely crucially on the strict convexity of the objective function \(R(W)\) to guarantee a unique minimum. \emph{A key direction for future work is to identify the precise geometric conditions—perhaps weaker than strict convexity, such as specific symmetries or properties of the objective function’s level sets—that suffice to guarantee meta-equivariance or analogous invariance properties. For instance, could meta-equivariance hold in piecewise-linear objectives, like those in support vector machines, where unique minima arise from margin constraints rather than curvature? Characterising these conditions could unlock generalised optimisation invariance principles applicable to broader settings, including non-convex optimisation in machine learning, variational inference under complex reparameterisations, and robust Bayesian analysis.}

%%%%%%%%%%%%%%%%%%%%%%%%%%%%%%%%%%%%%%%%%%%%%%%%%%%%%%%%%%%%%%%%%%%%%%%
%%% END OF SECTION 3                                                %%%
%%%%%%%%%%%%%%%%%%%%%%%%%%%%%%%%%%%%%%%%%%%%%%%%%%%%%%%%%%%%%%%%%%%%%%%

%%%%%%%%%%%%%%%%%%%%%%%%%%%%%%%%%%%%%%%%%%%%%%%%%%%%%%%%%%%%%%%%%%%%%%%
%%% SECTION 4: Geometry in Action: Visualising Meta-Equivariance    %%%
%%%%%%%%%%%%%%%%%%%%%%%%%%%%%%%%%%%%%%%%%%%%%%%%%%%%%%%%%%%%%%%%%%%%%%%
\section{Geometry in Action: Visualising Meta-Equivariance}
\label{sec:empirical}

To complement the theoretical development and provide concrete illustration, this section presents numerical simulations designed to verify the key predictions of Theorem~\ref{thm:param_invariance} and Theorem~\ref{thm:meta_equivariance}, and visualisations aimed at building intuition about the underlying geometric principles.

\subsection{Simulation Design}
\label{subsec:sim_design}

We consider the problem of combining two estimators \(\hat{\theta}_1, \hat{\theta}_2 \in \mathbb{R}^K\) with \(K=3\). Their joint asymptotic behaviour is simulated by specifying a joint covariance matrix \(\Sigma = \begin{psmallmatrix} V_1 & C \\ C^\transpose & V_2 \end{psmallmatrix}\). The components \(V_1, V_2\) are generated as random symmetric positive definite matrices, and \(C\) is chosen such that the overall matrix \(\Sigma\) is guaranteed positive definite (Assumption A2 holds), ensuring strict convexity of the risk. For simplicity, we assume zero asymptotic bias (\(\Delta_b = 0\), hence \(\Lambda = 0\)).

The objective is to find the weight matrix \(W \in \mathbb{R}^{K \times K}\) that minimises the unweighted trace-AMSE risk:
\[ R(W) = \tr(\AMSE(W)), \quad \text{where } \Omega = I. \]
With \(\Lambda=0\), the AMSE from Equation~\eqref{eq:amse_expanded} simplifies to the variance component:
\[ \AMSE(W) = (I-W)V_1(I-W)^\transpose + WV_2W^\transpose + (I-W)CW^\transpose + WC^\transpose(I-W)^\transpose. \]
We perform numerical minimisation using the L-BFGS-B algorithm \citep{byrd1995limited}, suitable for unconstrained optimisation, under both Parameterisation A (\(R_A(W_A) = R(W_A)\)) and Parameterisation B (\(R_B(W_B) = R(I - W_B)\)). An initial guess of \(W=0\) is used.

\subsection{Empirical Confirmation}
\label{subsec:empirical_confirm}

The numerical optimisation results provide strong confirmation of the theoretical predictions. We report typical findings from a representative simulation run (details on specific matrices \(V_1, V_2, C\) omitted for brevity).

\begin{table}[ht] % Using [ht] for placement suggestion
\centering
\caption{Numerical confirmation of Theorems~\ref{thm:param_invariance} and~\ref{thm:meta_equivariance}. Optimisation was performed under both parameterisations, and all quantities confirm theoretical predictions to high numerical precision.}
\vspace{0.5em} % Add vertical space before table
\begin{tabular}{lcl}
\toprule % From booktabs package
\textbf{Quantity} & \textbf{Value} & \textbf{Interpretation} \\
\midrule % From booktabs package
$\left\| W_{B,\text{opt}} - (I - W_{A,\text{opt}}) \right\|_F$ & $2.72 \times 10^{-7}$ & Confirms Thm~\ref{thm:meta_equivariance} (meta-equivariance of weights) \\
$\left\| \hat{\theta}_A^* - \hat{\theta}_B^* \right\|_2$ & $1.00 \times 10^{-6}$ & Confirms Thm~\ref{thm:param_invariance} (invariance of optimal estimator) \\
$| R(W_{A,opt}) - R(I-W_{B,opt}) |$ & $2.09 \times 10^{-13}$ & Confirms shared minimum risk under strict convexity \\ % Clarified risk consistency check
\bottomrule % From booktabs package
\end{tabular}
\label{tab:numerical_verification}
\end{table}

Table~\ref{tab:numerical_verification} summarizes the key verification metrics:
\begin{enumerate}
    \item \textbf{Covariance of Optimal Weights (Theorem~\ref{thm:meta_equivariance}):} The Frobenius norm difference between the numerically obtained \(W_{B,opt}\) and the theoretically predicted \(I - W_{A,opt}\) is negligible (\(\approx 10^{-7}\)), confirming the covariant transformation of weights under the affine map \(T(W)=I-W\).
    \item \textbf{Invariance of the Optimal Estimator (Theorem~\ref{thm:param_invariance}):} To verify this, we apply the optimal weights to placeholder base estimators (e.g., \(\hat{\theta}_1 = [1, 2, 3]^\transpose, \hat{\theta}_2 = [4, 5, 6]^\transpose\)). The Euclidean distance between \(\hat{\theta}_A^* = (I - W_{A,opt})\hat{\theta}_1 + W_{A,opt}\hat{\theta}_2\) and \(\hat{\theta}_B^* = W_{B,opt}\hat{\theta}_1 + (I - W_{B,opt})\hat{\theta}_2\) is also negligible (\(\approx 10^{-6}\)), confirming the statistical decision is invariant.
    \item \textbf{Equivalence of Optimal Risk:} The minimum risk values achieved under both parameterisations, evaluated at their respective optima (i.e., comparing \(R_A(W_{A,opt})\) and \(R_B(W_{B,opt})\), which equals \(R(W_{A,opt})\) and \(R(I-W_{B,opt})\)), are identical up to machine precision (\(\approx 10^{-13}\)), consistent with the unique minimum of the strictly convex risk function. The table entry \(| R(W_{A,opt}) - R(I-W_{B,opt}) |\) directly checks this minimum value consistency.
\end{enumerate}

% --- Interpretive Paragraph ---
This numerical symmetry, observed consistently across simulations, is not mere coincidence or an artefact of the chosen algorithm; it is empirical evidence reflecting the deeper geometric invariance established theoretically. It confirms that the optimisation landscape possesses an intrinsic structure, defined by the convex risk functional, that is fundamentally independent of the affine coordinate system used to navigate it. The optimiser, governed by the gradient dynamics on this landscape, inevitably converges to the same optimal point (representing the optimal weights, relative to their coordinate system) which then maps to the same invariant optimal estimator, irrespective of whether the coordinate chart corresponds to Form A or Form B. This suggests a form of algorithmic stability under affine reparameterisation inherent in these convex problems.

\subsection{Visualisations}
\label{subsec:visualisations}

To build further intuition, we present visualisations illustrating the concepts.

% --- Figure 3 (was Figure 1 originally): 1D Risk Profile ---
\begin{figure}[ht]
    \centering
    % IMPORTANT: Replace 'Figure_3.png' with the actual filename if available
    \includegraphics[width=0.6\linewidth]{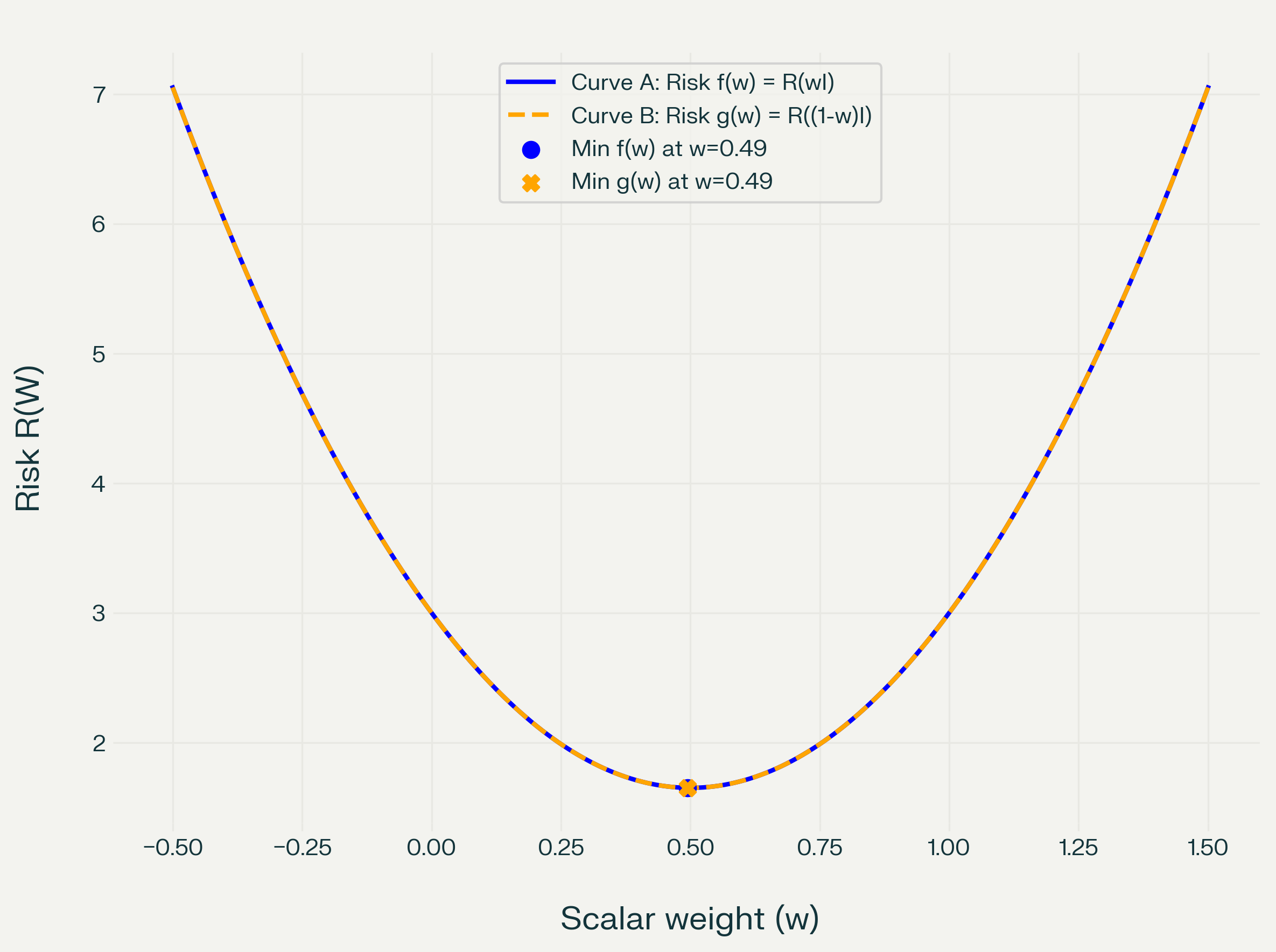} % Placeholder image file
    % --- Refinement 19: Enhance Geometric Intuition --- (Adapted for 1D plot)
    \caption{Risk geometry projected onto a scalar subspace (\(W=wI\)). Curve A (blue) plots trace-AMSE \(R(wI)\) vs \(w\) (weight on \(\hat{\theta}_2\)). Curve B (orange, dashed) plots \(R((1-w)I)\) vs \(w\) (weight on \(\hat{\theta}_1\)). \emph{The mirrored shapes reveal meta-equivariance in 1D:} symmetry \(g(w) = f(1 - w)\) shows optimal weights shift (\(w_B^* \approx 1 - w_A^*\)) yet describe the same solution. Both minima trace back to the same invariant estimator.}
    \label{fig:risk_1d}
\end{figure}

Figure~\ref{fig:risk_1d} shows the risk function projected onto the scalar subspace where \(W=wI\). The reflection symmetry between the risk curves for Parameterisation A (\(f(w)=R(wI)\)) and B (\(g(w)=R((1-w)I)\)) directly illustrates the relationship \(R_A(W) = R_B(I-W)\). The unique minimum risk value is achieved at weights \(w_A^*\) and \(w_B^*\) satisfying \(w_B^* = 1 - w_A^*\), confirming Theorem~\ref{thm:meta_equivariance} in this simplified setting. Crucially, both lead to the same optimal estimator \(\hat{\theta}^*\).

% --- Figure 4 (was Figure 3 originally): Commutative Diagram ---
\begin{figure}[ht]
    \centering
    \begin{tikzcd}[column sep=large, row sep=large]
    \mathcal{W}_A \arrow[rr, "T"] \arrow[dr, "\mathcal{E}_A"'] & & \mathcal{W}_B \arrow[dl, "\mathcal{E}_B"] \\
    & \mathbb{R}^K &
    \end{tikzcd}
    \caption{Commutative diagram illustrating estimator invariance. The affine map \(T(W) = I - W\) transforms the weight parameterisation from \(\mathcal{W}_A\) (weights on \(\hat{\theta}_2\)) to \(\mathcal{W}_B\) (weights on \(\hat{\theta}_1\)). The mappings \(\mathcal{E}_A(W_A) = (I-W_A)\hat{\theta}_1 + W_A\hat{\theta}_2\) and \(\mathcal{E}_B(W_B) = W_B\hat{\theta}_1 + (I-W_B)\hat{\theta}_2\) map optimal weights to the estimator space \(\mathbb{R}^K\). Meta-equivariance (Theorem~\ref{thm:meta_equivariance}) ensures \(W_{B,opt} = T(W_{A,opt})\). The diagram commutes (\(\mathcal{E}_A = \mathcal{E}_B \circ T\) evaluated at optima), meaning both paths lead to the same invariant estimator \(\hat{\theta}^*\), realising Theorem~\ref{thm:param_invariance}. For a category-theoretic perspective, see \citet{fong2019}.}
    \label{fig:commutative_diagram}
\end{figure}
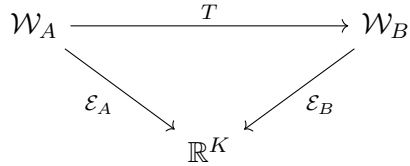

Figure~\ref{fig:commutative_diagram} provides an abstract representation using a commutative diagram. It shows that transforming the parameterisation first (\(T\)) and then mapping to the estimator space (\(\mathcal{E}_B\)) yields the same result as mapping directly from the original parameterisation (\(\mathcal{E}_A\)). This visualises the consistency guaranteed by the theorems: the structure ensures the final destination (\(\hat{\theta}^*\)) is independent of the path taken through different coordinate systems.

\subsection{Implications}
\label{subsec:implications}

The numerical results and visualisations strongly corroborate the theoretical framework. They demonstrate that meta-equivariance and the resulting estimator invariance are not merely abstract properties but robust phenomena observable in practical optimisation settings under the specified convexity conditions. This provides significant reassurance for practitioners: the choice between algebraically equivalent affine parameterisations for problems like optimal estimator combination does not bias the resulting optimal statistical decision when the underlying risk function is strictly convex. The optimality achieved is intrinsic to the problem setup, enhancing the reliability and credibility of inferences drawn from such methods.

%%%%%%%%%%%%%%%%%%%%%%%%%%%%%%%%%%%%%%%%%%%%%%%%%%%%%%%%%%%%%%%%%%%%%%%
%%% END OF SECTION 4                                                %%%
%%%%%%%%%%%%%%%%%%%%%%%%%%%%%%%%%%%%%%%%%%%%%%%%%%%%%%%%%%%%%%%%%%%%%%%

%%%%%%%%%%%%%%%%%%%%%%%%%%%%%%%%%%%%%%%%%%%%%%%%%%%%%%%%%%%%%%%%%%%%%%%
%%% CONCLUSION                                                      %%%
%%%%%%%%%%%%%%%%%%%%%%%%%%%%%%%%%%%%%%%%%%%%%%%%%%%%%%%%%%%%%%%%%%%%%%%
\section*{Conclusion: Inference as Geometric Construction} % Or simply \section{Conclusion}

This paper established meta-equivariance, a fundamental invariance principle governing optimal solutions derived from strictly convex risk minimisation. We demonstrated that optimal combined estimators, and more generally, the unique minimisers of convex matrix objectives, are robust to invertible affine reparameterisations of the decision space. The optimal solution, viewed geometrically, is a coordinate-free point on the underlying manifold; its coordinate representation transforms covariantly, but the point itself remains invariant.

This perspective reframes statistical optimisation not merely as calculation, but as a form of geometric construction. Strict convexity provides the rigid structure necessary for a unique optimum, while affine transformations act as permissible changes of basis within that structure. Meta-equivariance reveals that the optimality we seek is intrinsic, echoing the coordinate-free nature of physical laws or geometric truths. It assures us that when the conditions hold, our conclusions are properties of the problem itself, not artefacts of the language we choose to describe it. This work invites further exploration into the geometric underpinnings of statistical decision-making, particularly where symmetry, convexity, and invariance intersect to shape robust and reliable inference.
%%%%%%%%%%%%%%%%%%%%%%%%%%%%%%%%%%%%%%%%%%%%%%%%%%%%%%%%%%%%%%%%%%%%%%%
%%% END OF CONCLUSION                                               %%%
%%%%%%%%%%%%%%%%%%%%%%%%%%%%%%%%%%%%%%%%%%%%%%%%%%%%%%%%%%%%%%%%%%%%%%%

%%%%%%%%%%%%%%%%%%%%%%%%%%%%%%%%%%%%%%%%%%%%%%%%%%%%%%%%%%%%%%%%%%%%%%%
%%% ACKNOWLEDGEMENTS                                                %%%
%%%%%%%%%%%%%%%%%%%%%%%%%%%%%%%%%%%%%%%%%%%%%%%%%%%%%%%%%%%%%%%%%%%%

\section*{Author's Note}
{\small
I am a second-year undergraduate in economics at the University of Bristol working independently. This paper belongs to a research programme I am developing across information theory, econometrics and mathematical statistics.

This work was produced primarily through AI systems that I directed and orchestrated. The AI generated the mathematical content, proofs and symbolic derivations based on my research questions and guidance. I have no formal mathematical training but am eager to learn through this process of directing AI-powered mathematical exploration.

My contribution involves designing research directions, evaluating and selecting AI outputs, and ensuring the coherence of the overall research agenda. All previously published work that influenced these results has been cited to the best of my knowledge and research capabilities in my current position. The presentation aims to be pedagogically accessible.
}
%%%%%%%%%%%%%%%%%%%%%%%%%%%%%%%%%%%%%%%%%%%%%%%%%%%%%%%%%%%%%%%%%%%%%%%
%%% END OF ACKNOWLEDGEMENTS                                         %%%
%%%%%%%%%%%%%%%%%%%%%%%%%%%%%%%%%%%%%%%%%%%%%%%%%%%%%%%%%%%%%%%%%%%%%%%

%%%%%%%%%%%%%%%%%%%%%%%%%%%%%%%%%%%%%%%%%%%%%%%%%%%%%%%%%%%%%%%%%%%%%%%
%%% APPENDICES                                                      %%%
%%%%%%%%%%%%%%%%%%%%%%%%%%%%%%%%%%%%%%%%%%%%%%%%%%%%%%%%%%%%%%%%%%%%%%%
\begin{appendices} % Starts the appendix environment

% ======================================================================
\section{Notation Summary}
\label{app:notation}
% ======================================================================

The following table summarises key notation used throughout the paper for quick reference. The estimators \(\hat{\theta}_1, \hat{\theta}_2\) are assumed to be \(\sqrt{N}\)-consistent for \(\theta\).

\vspace{1em} % Add some space before the table
\centering
\begin{tabular}{ll}
\toprule % From booktabs package
Symbol & Description \\
\midrule % From booktabs package
\(\theta \in \mathbb{R}^K\) & True parameter vector \\
\(\hat{\theta}_1, \hat{\theta}_2 \in \mathbb{R}^K\) & Two \(\sqrt{N}\)-consistent estimators of \(\theta\) \\
\(W, W_A, W_B \in \mathbb{R}^{K \times K}\) & Weight matrix in optimisation \\
\(I \in \mathbb{R}^{K \times K}\) & Identity matrix \\
\(N\) & Sample size (for scaling in asymptotics) \\
\(\mathbf{b}_1, \mathbf{b}_2 \in \mathbb{R}^K\) & Asymptotic biases (\(\times \sqrt{N}\)) \\
\(V_1, V_2 \in \mathbb{R}^{K \times K}\) & Asymptotic variance matrices (\(\times N\)) \\
\(C \in \mathbb{R}^{K \times K}\) & Asymptotic covariance matrix (\(\times N\)) \\
\(\Sigma = \begin{psmallmatrix} V_1 & C \\ C^\transpose & V_2 \end{psmallmatrix} \in \mathbb{R}^{2K \times 2K}\) & Joint asymptotic covariance matrix of \((\hat{\theta}_1, \hat{\theta}_2)\) \\
\(\Delta_b = \mathbf{b}_2 - \mathbf{b}_1\) & Relative asymptotic bias vector \\
\(\Lambda = \Delta_b \Delta_b^\transpose\) & Outer product capturing relative bias contribution (\(\Lambda \succeq 0\)) \\
\(\hat{\theta}(W) = (I-W)\hat{\theta}_1 + W\hat{\theta}_2\) & Combined estimator (Parameterisation A form) \\
\(\AMSE(W)\) & Asymptotic Mean Squared Error matrix of \(\hat{\theta}(W)\) (\(\times N\)) \\
\(\Omega \in \mathbb{R}^{K \times K}\) & Positive definite weighting matrix for risk (\(\Omega \succ 0\)) \\
\(R(W) = \tr(\Omega^{-1}\AMSE(W))\) & Scalar trace-AMSE risk function \\
\(T: \mathcal{W} \to \mathcal{W}\) & Affine transformation of the parameter space \(\mathcal{W} = \mathbb{R}^{K \times K}\) \\
\(W_{opt}, W_{A,opt}, W_{B,opt}\) & Optimal weight matrices \\
\(\hat{\theta}^*\) & Optimal combined estimator (invariant) \\
\(\nabla R(W)\) & Gradient of \(R\) with respect to matrix \(W\) \\
\(\tr(\cdot)\) & Trace operator \\
\(\vecop(\cdot)\) & Vectorisation operator \\
\(\otimes\) & Kronecker product \\
\(A^\transpose\) & Transpose of matrix \(A\) \\
\(A^{-1}\) & Inverse of matrix \(A\) \\
\(A \succ 0\) (\(\succeq 0\)) & Symmetric positive definite (semidefinite) \\
\(\xrightarrow{d}\) & Convergence in distribution \\
\(\mathcal{N}(\mu, \Sigma)\) & Multivariate normal distribution \\
\(\mathbb{R}^{m \times n}\) & Space of \(m \times n\) real matrices \\
\(\mathcal{W}_A, \mathcal{W}_B\) & Parameter spaces under parameterisations A and B \\
\(\mathcal{E}_A, \mathcal{E}_B\) & Estimator mapping functions for A and B \\
\(\AsyE[\cdot]\) & Asymptotic Expectation operator \\
\(\AsyVar[\cdot]\) & Asymptotic Variance operator \\
\(\AsyCov[\cdot]\) & Asymptotic Covariance operator \\
\(\AsyBias[\cdot]\) & Asymptotic Bias operator \\
\bottomrule % From booktabs package
\end{tabular}
\vspace{1em} % Add some space after the table

% ======================================================================
\section{Derivation of the AMSE Formula}
\label{app:amse}
% ======================================================================
% (Optional Note: This derivation assumes first-order asymptotic biases b1, b2 are O(1) constants w.r.t sqrt(N) scaling)

We derive the Asymptotic Mean Squared Error (AMSE) matrix for the combined estimator \(\hat{\theta}(W) = (I-W)\hat{\theta}_1 + W\hat{\theta}_2\). The AMSE is defined as the sum of the asymptotic variance and the outer product of the asymptotic bias \citep{lehmann1998}:
\begin{equation*}
\AMSE(W) = \AsyVar[\sqrt{N}(\hat{\theta}(W) - \theta)] + \AsyBias[\sqrt{N}(\hat{\theta}(W) - \theta)] \AsyBias[\sqrt{N}(\hat{\theta}(W) - \theta)]^\transpose.
\end{equation*}

First, consider the scaled estimation error:
\begin{align*}
\sqrt{N}(\hat{\theta}(W) - \theta) &= \sqrt{N}((I-W)\hat{\theta}_1 + W\hat{\theta}_2 - \theta) \\
&= \sqrt{N}((I-W)\hat{\theta}_1 - (I-W)\theta + W\hat{\theta}_2 - W\theta) \\
&= (I-W)[\sqrt{N}(\hat{\theta}_1 - \theta)] + W[\sqrt{N}(\hat{\theta}_2 - \theta)].
\end{align*}

Let \(\mathbf{e}_1 = \sqrt{N}(\hat{\theta}_1 - \theta)\) and \(\mathbf{e}_2 = \sqrt{N}(\hat{\theta}_2 - \theta)\). From the joint asymptotic distribution \eqref{eq:joint_dist}, we have:
\begin{itemize}
    \item \(\AsyE[\mathbf{e}_1] = \mathbf{b}_1\)
    \item \(\AsyE[\mathbf{e}_2] = \mathbf{b}_2\)
    \item \(\AsyVar[\mathbf{e}_1] = V_1\)
    \item \(\AsyVar[\mathbf{e}_2] = V_2\)
    \item \(\AsyCov[\mathbf{e}_1, \mathbf{e}_2] = C\)
\end{itemize}

The asymptotic bias of the combined estimator is:
\begin{align}
\mathbf{b}(W) &= \AsyBias[\sqrt{N}(\hat{\theta}(W) - \theta)] \nonumber \\
&= \AsyE[(I-W)\mathbf{e}_1 + W\mathbf{e}_2] \nonumber \\
&= (I-W)\AsyE[\mathbf{e}_1] + W\AsyE[\mathbf{e}_2] \nonumber \\
&= (I-W)\mathbf{b}_1 + W\mathbf{b}_2. \label{eq:app_bias} \tag{B.1}
\end{align}

The asymptotic variance of the combined estimator is calculated using standard properties of variance/covariance:
\begin{align}
\AsyVar[\sqrt{N}(\hat{\theta}(W) - \theta)] &= \AsyVar[(I-W)\mathbf{e}_1 + W\mathbf{e}_2] \nonumber \\
&= (I-W)\AsyVar[\mathbf{e}_1](I-W)^\transpose + W\AsyVar[\mathbf{e}_2]W^\transpose \nonumber \\
&\quad + (I-W)\AsyCov[\mathbf{e}_1, \mathbf{e}_2]W^\transpose + W\AsyCov[\mathbf{e}_2, \mathbf{e}_1](I-W)^\transpose \nonumber \\
&= (I-W)V_1(I-W)^\transpose + WV_2W^\transpose \nonumber \\
&\quad + (I-W)CW^\transpose + WC^\transpose(I-W)^\transpose. \label{eq:app_var} \tag{B.2}
\end{align}

Combining the variance \eqref{eq:app_var} and the outer product of the bias \eqref{eq:app_bias}, the AMSE matrix is:
\begin{align}
\AMSE(W) = &(I-W)V_1(I-W)^\transpose + WV_2W^\transpose \nonumber \\
&+ (I-W)CW^\transpose + WC^\transpose(I-W)^\transpose \nonumber \\
&+ [(I-W)\mathbf{b}_1 + W\mathbf{b}_2] [(I-W)\mathbf{b}_1 + W\mathbf{b}_2]^\transpose. \label{eq:app_amse_general} \tag{B.3}
\end{align}

Now, consider the simplification used in the main text where \(\mathbf{b}_1 = 0\) and \(\mathbf{b}_2 = \Delta_b\). The bias term becomes:
\begin{equation*}
\mathbf{b}(W) = (I-W)\mathbf{0} + W\Delta_b = W\Delta_b.
\end{equation*}
The outer product of the bias term simplifies to:
\begin{equation*}
\mathbf{b}(W)\mathbf{b}(W)^\transpose = (W\Delta_b)(W\Delta_b)^\transpose = W(\Delta_b \Delta_b^\transpose)W^\transpose = W\Lambda W^\transpose,
\end{equation*}
where \(\Lambda = \Delta_b \Delta_b^\transpose\). Substituting this into the general AMSE formula \eqref{eq:app_amse_general} yields Equation~\eqref{eq:amse_expanded} from the main text:
\begin{equation*}
\AMSE(W) = (I-W)V_1(I-W)^\transpose + WV_2W^\transpose + (I-W)CW^\transpose + WC^\transpose(I-W)^\transpose + W\Lambda W^\transpose.
\end{equation*}
Note that \(\Lambda \succeq 0\) (positive semidefinite) since it is an outer product.

% ======================================================================
\section{Proof of Strict Convexity of Trace-AMSE Risk}
\label{app:convexity}
% ======================================================================

We aim to prove that the trace-AMSE risk function \(R(W) = \tr(\Omega^{-1}\AMSE(W))\) is strictly convex with respect to the weight matrix \(W \in \mathbb{R}^{K \times K}\), given Assumptions (A1)-(A2) (\(V_1, V_2, \Sigma \succ 0\)) and the risk weighting matrix \(\Omega \succ 0\). Strict convexity requires showing that the Hessian matrix of \(R(W)\) with respect to the vectorised weights \(\vecop(W)\) is positive definite.

We use the AMSE formula \eqref{eq:amse_expanded}, derived under \(\mathbf{b}_1=0, \mathbf{b}_2=\Delta_b\). (The general bias case \eqref{eq:app_amse_general} adds only linear and constant terms in \(W\), which do not affect the Hessian.)
\begin{align*}
\AMSE(W) % = &(I-W)V_1(I-W)^\transpose + WV_2W^\transpose + (I-W)CW^\transpose + WC^\transpose(I-W)^\transpose + W\Lambda W^\transpose \\
% Expand the terms
= &V_1 - WV_1 - V_1W^\transpose + WV_1W^\transpose + WV_2W^\transpose \\
&+ CW^\transpose - WCW^\transpose + WC^\transpose - WC^\transpose W^\transpose + W\Lambda W^\transpose \\
% Group terms by powers of W
= &V_1 + WC^\transpose + CW^\transpose - WV_1 - V_1W^\transpose \\
&+ W(V_1 + V_2 - C - C^\transpose + \Lambda)W^\transpose.
\end{align*}

The risk function is \(R(W) = \tr(\Omega^{-1}\AMSE(W))\). We focus on the terms quadratic in \(W\), as linear and constant terms vanish upon taking the second derivative:
\begin{equation*}
R_{quad}(W) = \tr[\Omega^{-1} W(V_1 + V_2 - C - C^\transpose + \Lambda)W^\transpose].
\end{equation*}

Let \(w = \vecop(W)\) and define the symmetric matrix \(M = V_1 + V_2 - C - C^\transpose + \Lambda\). Using the identity \(\tr(AXBX^\transpose) = (\vecop(X))^\transpose (B^\transpose \otimes A) \vecop(X)\) for symmetric \(B\) \citep[e.g.,][]{magnus1999}, we rewrite \(R_{quad}(W)\):
\begin{align*}
R_{quad}(W) &= \tr[\Omega^{-1} W M W^\transpose] \\
&= (\vecop(W))^\transpose (M \otimes \Omega^{-1}) \vecop(W) \quad (\text{since } M=M^\transpose, \Omega^{-1}=(\Omega^{-1})^\transpose) \\
&= w^\transpose (M \otimes \Omega^{-1}) w.
\end{align*}
The Hessian matrix of \(R(W)\) with respect to \(w = \vecop(W)\) is twice the symmetric matrix in the quadratic form:
\begin{equation*}
H = \frac{\partial^2 R(W)}{\partial w \partial w^\transpose} = 2 (M \otimes \Omega^{-1}).
\end{equation*}

For strict convexity, we need \(H \succ 0\). Since \(\Omega \succ 0\) implies \(\Omega^{-1} \succ 0\), and the Kronecker product of positive definite matrices is positive definite \citep[e.g.,][Theorem 4.2.12]{horn2012matrix}, the strict convexity of \(R(W)\) hinges entirely on whether the matrix \(M = V_1 + V_2 - C - C^\transpose + \Lambda\) is positive definite (\(M \succ 0\)).

We now demonstrate that \(M \succ 0\) under the given assumptions, focusing on its quadratic form. Consider \(z^\transpose M z\) for any non-zero vector \(z \in \mathbb{R}^K\), \(z \neq 0\):
\begin{align}
z^\transpose M z &= z^\transpose (V_1 + V_2 - C - C^\transpose + \Lambda) z \nonumber \\
&= z^\transpose V_1 z + z^\transpose V_2 z - z^\transpose C z - z^\transpose C^\transpose z + z^\transpose \Lambda z \nonumber \\
% The key rearrangement to involve Sigma:
&= \begin{pmatrix} z \\ -z \end{pmatrix}^\transpose \underbrace{\begin{pmatrix} V_1 & C \\ C^\transpose & V_2 \end{pmatrix}}_{\Sigma} \begin{pmatrix} z \\ -z \end{pmatrix} + z^\transpose \Lambda z. \label{eq:app_Mz_decomp} \tag{C.1}
\end{align}
Let \(v = \begin{pmatrix} z \\ -z \end{pmatrix}\). Since \(z \in \mathbb{R}^K\) and \(z \neq 0\), the vector \(v \in \mathbb{R}^{2K}\) must also be non-zero (\(v \neq 0\)).

Now, let's analyse the two terms in Equation~\eqref{eq:app_Mz_decomp}:
\begin{enumerate}
    \item \textbf{The term involving \(\Sigma\):} By Assumption (A2), the joint asymptotic covariance matrix \(\Sigma\) is positive definite (\(\Sigma \succ 0\)). By definition of positive definiteness, this means that for any non-zero vector \(v\), the quadratic form \(v^\transpose \Sigma v\) must be strictly positive. Since \(v = \begin{pmatrix} z \\ -z \end{pmatrix}\) is non-zero when \(z \neq 0\), we have:
    \begin{equation*}
    v^\transpose \Sigma v = \begin{pmatrix} z \\ -z \end{pmatrix}^\transpose \Sigma \begin{pmatrix} z \\ -z \end{pmatrix} > 0.
    \end{equation*}
    \emph{This term provides the crucial strict positivity.}

    \item \textbf{The term involving \(\Lambda\):} The matrix \(\Lambda = \Delta_b \Delta_b^\transpose\) is constructed as an outer product. Any outer product matrix is positive semidefinite (\(\Lambda \succeq 0\)). Therefore, its associated quadratic form must be non-negative:
    \begin{equation*}
    z^\transpose \Lambda z = z^\transpose (\Delta_b \Delta_b^\transpose) z = (z^\transpose \Delta_b)(\Delta_b^\transpose z) = (\Delta_b^\transpose z)^2 \ge 0.
    \end{equation*}
    \emph{This term is non-negative, but not necessarily strictly positive.} It is zero if \(\Delta_b = 0\) or if \(z\) is orthogonal to \(\Delta_b\).
\end{enumerate}

Combining these two components, we return to the expression for \(z^\transpose M z\):
\begin{equation*}
z^\transpose M z = \underbrace{v^\transpose \Sigma v}_{\textbf{strictly positive } (>0)} + \underbrace{z^\transpose \Lambda z}_{\textbf{non-negative } (\ge 0)}.
\end{equation*}
The sum of a strictly positive number and a non-negative number is always strictly positive. Therefore, we conclude that:
\begin{equation*}
z^\transpose M z > 0 \quad \text{for all } z \neq 0.
\end{equation*}
This confirms that the matrix \(M\) is positive definite (\(M \succ 0\)).

\textbf{Crucial Clarification:} The positive definiteness of \(M\) is fundamentally guaranteed by the assumption that the joint covariance matrix \(\Sigma\) is positive definite (\(\Sigma \succ 0\)). The term \(v^\transpose \Sigma v\) ensures strict positivity. The bias contribution term \(z^\transpose \Lambda z\) is merely non-negative (\(\ge 0\)) and does not need to be strictly positive. Therefore, the condition \(\Delta_b \neq 0\) (which influences \(\Lambda\)) is \textbf{not required} for establishing \(M \succ 0\). Even in the case where \(\Delta_b = 0\) (implying \(\Lambda = 0\)), the matrix \(M = V_1 + V_2 - C - C^\transpose\) remains strictly positive definite solely because \(v^\transpose \Sigma v > 0\).

Since we have established \(M \succ 0\), and we know \(\Omega^{-1} \succ 0\), their Kronecker product \(H = 2(M \otimes \Omega^{-1})\) is positive definite. A positive definite Hessian matrix guarantees that the risk function \(R(W)\) is strictly convex with respect to \(w = \vecop(W)\), and consequently, strictly convex with respect to the matrix \(W\).

% ======================================================================
\section{Derivation of the Gradient Transformation in Theorem 2}
\label{app:gradient}
% ======================================================================

We derive the relationship between the gradients \(\nabla R_1(W_1)\) and \(\nabla R_2(W_2)\) under the invertible affine transformation \(W_2 = T(W_1) = AW_1B + K\), where \(R_1(W_1) = R(W_1)\) and \(R_2(W_2) = R(T^{-1}(W_2))\). The functions satisfy \(R_1(W_1) = R_2(T(W_1))\). This derivation assumes \(W_1, W_2\) vary freely over the unconstrained space \(\mathbb{R}^{K \times K}\).

We use the definition of the matrix derivative (gradient) based on the first differential \citep{magnus1999}. The differential of a scalar function \(f(X)\) with respect to matrix \(X\) is given by \(df = \tr((\nabla f(X))^\transpose dX)\).

Let \(W_2 = T(W_1) = AW_1B + K\). The differential \(dW_2\) is related to \(dW_1\) by:
\begin{equation}
dW_2 = d(AW_1B + K) = A(dW_1)B + 0 = A(dW_1)B. \label{eq:app_diff_W} \tag{D.1}
\end{equation}

Since \(R_1(W_1) = R_2(W_2)\), their differentials must be equal:
\begin{equation}
dR_1 = dR_2. \label{eq:app_diff_R} \tag{D.2}
\end{equation}

Using the definition of the gradient, we can write the differentials as:
\begin{align}
dR_1 &= \tr((\nabla R_1(W_1))^\transpose dW_1) \label{eq:app_dR1} \tag{D.3} \\
dR_2 &= \tr((\nabla R_2(W_2))^\transpose dW_2). \label{eq:app_dR2} \tag{D.4}
\end{align}

Substitute \(dW_2 = A(dW_1)B\) from \eqref{eq:app_diff_W} into \eqref{eq:app_dR2}:
\begin{equation*}
dR_2 = \tr((\nabla R_2(W_2))^\transpose A(dW_1)B).
\end{equation*}

Now, use the cyclic property of the trace operator: \(\tr(XYZ) = \tr(ZXY)\). Let \(X = (\nabla R_2(W_2))^\transpose A\), \(Y = dW_1\), \(Z = B\).
\begin{align}
dR_2 &= \tr(B (\nabla R_2(W_2))^\transpose A dW_1) \nonumber \\
&= \tr((A^\transpose \nabla R_2(W_2) B^\transpose)^\transpose dW_1). \label{eq:app_dR2_manip} \tag{D.5}
\end{align}
In the last step, we used the property \( (XYZ)^\transpose = Z^\transpose Y^\transpose X^\transpose \), applied to \(A^\transpose\), \(\nabla R_2(W_2)\), and \(B^\transpose\). That is, \( (A^\transpose \nabla R_2(W_2) B^\transpose)^\transpose = (B^\transpose)^\transpose (\nabla R_2(W_2))^\transpose (A^\transpose)^\transpose = B (\nabla R_2(W_2))^\transpose A \).

Equating the expressions for \(dR_1\) \eqref{eq:app_dR1} and the manipulated \(dR_2\) \eqref{eq:app_dR2_manip}, using \(dR_1 = dR_2\):
\begin{equation*}
\tr((\nabla R_1(W_1))^\transpose dW_1) = \tr((A^\transpose \nabla R_2(W_2) B^\transpose)^\transpose dW_1).
\end{equation*}
Since this identity must hold for any arbitrary differential matrix \(dW_1\), the matrices premultiplying \(dW_1\) inside the trace (after transposition) must be equal:
\begin{equation*}
\nabla R_1(W_1) = A^\transpose \nabla R_2(W_2) B^\transpose.
\end{equation*}
Finally, substitute \(W_2 = T(W_1)\) to express the relationship purely in terms of \(W_1\):
\begin{equation}
\nabla R_1(W_1) = A^\transpose \nabla R_2(T(W_1)) B^\transpose. \label{eq:app_grad_transform} \tag{D.6}
\end{equation}
This is Equation~\eqref{eq:gradient_transform} used in the proof of Theorem~\ref{thm:meta_equivariance}.

\end{appendices}
%%%%%%%%%%%%%%%%%%%%%%%%%%%%%%%%%%%%%%%%%%%%%%%%%%%%%%%%%%%%%%%%%%%%%%%
%%% END OF APPENDICES                                               %%%
%%%%%%%%%%%%%%%%%%%%%%%%%%%%%%%%%%%%%%%%%%%%%%%%%%%%%%%%%%%%%%%%%%%%%%%

%%%%%%%%%%%%%%%%%%%%%%%%%%%%%%%%%%%%%%%%%%%%%%%%%%%%%%%%%%%%%%%%%%%%%%%
%%% BIBLIOGRAPHY                                                    %%%
%%%%%%%%%%%%%%%%%%%%%%%%%%%%%%%%%%%%%%%%%%%%%%%%%%%%%%%%%%%%%%%%%%%%%%%
% Use natbib compatible style like apalike or plainnat
\bibliographystyle{apalike}
% Replace 'references' with the name of your .bib file (without extension)
% The content of references.bib should follow this section
% \bibliography{references} 

% --- Placeholder Bibliography Content (replace with actual .bib file content) ---

%%%%%%%%%%%%%%%%%%%%%%%%%%%%%%%%%%%%%%%%%%%%%%%%%%%%%%%%%%%%%%%%%%%%%%%
%%% END DOCUMENT                                                    %%%
%%%%%%%%%%%%%%%%%%%%%%%%%%%%%%%%%%%%%%%%%%%%%%%%%%%%%%%%%%%%%%%%%%%%%%%
\end{document}